	\newcommand{\subparagraph}{}
\def \fp {\textit{fp}}
\def \wss {\textit{wss}}
\def \rt {\textit{rt}}
\def \mr {\textit{mr}}
\def \ft {\textit{ft}}
\def \im {\textit{im}}
\def \lt {\textit{la}}
\def \lte {l_e}
\def \fte {f_e}
\def \ltea {l_e'}
\def \ftea {f_e'}
\def \fthead {f'}
\def \lthead {l'}
\def \rthead {rt'}
\def \fttail {f''}
\def \lttail {l''}
\def \rttail {rt''}
\def \ftehead {f_e'}
\def \ltehead {l_e'}
\def \ftetail {f_e''}
\def \ltetail {l_e''}
\def \ai {\textit{ai}}
\def \pd {\textit{pd}}
\def \mt {\textit{mt}}
\begin{document}

\normalem

	\setcounter{tocdepth}{3}
	\setcounter{secnumdepth}{3}

	\newcommand {\myd}{\;\ mathrm{d}}
	\newcommand{\algorithmicname}{\textbf{}}
	\newcommand{\bigo}{O}
	\newcommand{\TO}{\textbf{\ to\ }}
	\newcommand{\STEP}{\textbf{\ step\ }}

\setlength{\pdfpageheight}{\paperheight}
\setlength{\pdfpagewidth}{\paperwidth}

\markboth{L. Yuan et al.}{A Measurement Theory of Locality }

\author{LIANG YUAN
\affil{SKL of Computer Architecture, Institute of Computing Technology, CAS}
CHEN DING
\affil{University of Rochester}
PETER DENNING
\affil{Naval Postgraduate School}
YUNQUAN ZHANG
\affil{SKL of Computer Architecture, Institute of Computing Technology, CAS}}

\title{A Measurement Theory of Locality (MTL)}

\begin{abstract}

Locality is a fundamental principle used extensively in program and system optimization.  It can be measured in many ways.  This paper formalizes the metrics of locality into a measurement theory.  The new theory includes the precise definition of locality metrics based on access frequency, reuse time, reuse distance, working set, footprint, and the cache miss ratio.  It gives the formal relation between these definitions and the proofs of equivalence or non-equivalence.  It provides the theoretical justification for four successful locality models in operating systems, programming languages, and computer architectures which were developed empirically.


\end{abstract}




\begin{bottomstuff}
The manuscript is new and not a revision of a previous conference paper.

\end{bottomstuff}

\maketitle

\section{Introduction}

Locality is a fundamental property of computation and a central principle in software, hardware and algorithmic design~\citep{Denning:CACM05loc}.  As defined by Denning, it is the ``tendency for programs to cluster references to subsets of address space for extended periods.''~\citep[pp. 143]{DenningM:Book15}

Existing literature provides many ways to measure locality: reuse frequency, miss
frequency, reuse distance, footprint and working set.  They are intuitively related, i.e. data reuse in a
program is likely to become data reuse in cache and therefore reduces
the miss frequency.  The relation, however, is not all
clear.  Some metrics, e.g. reuse distance, do not depend on cache size, but other metrics, e.g. miss ratio, do. 
Without a precise relation, we do not know which data reuse becomes a cache hit.  As
a result we do not have reliable properties.  For example, it is
possible to have more reuses in a program but fewer reuses in cache.
Locality optimization cannot be sufficiently formulated without knowing how optimizing for one metric would affect other metrics. 

In this paper, we give a measurement theory of locality (MTL) to formalize the relation between a set of locality metrics.  Measurement of locality is the assignment of a number so that locality of different programs can be compared.  The measurement theory consists of a set of locality metrics, the relation between them, and their precision and error. 




The theory has a limited scope.  It is a theory about locality measurements
but not directly about locality optimization.  It can compute
the amount of data transfer in a memory hierarchy but does not
minimize the amount of data transfer, nor does it optimize the
data layout, which is a more complex problem (for either processor
cache~\cite{PetrankR:POPL02} or virtual memory~\cite{Lavaee:POPL16}).
It assumes automatic cache management by least-recently used (LRU)
replacement or similar policies.  It does not solve the more general
problem of I/O complexity~\cite{HongK:STOC81,Elango+:POPL15}.



\section{Measurement Theory of Locality}

We will first present an overview that divides the locality measurements 
into six categories and then present them in subsections.

\subsection{Overview}

Figure~\ref{fig:ct-tree} shows locality metrics in three top-level and four second-level categories.  At the top level, \emph{access metrics} are measures of locality for each memory access.  The other two types are mathematical functions: \emph{timescale metrics}, whose parameter is a length of time, and \emph{cache metrics}, whose parameter is a cache size.  For these metrics, the measurement theory gives their precise definitions and properties.

\begin{figure}[h]
\centering
\includegraphics[width=8cm]{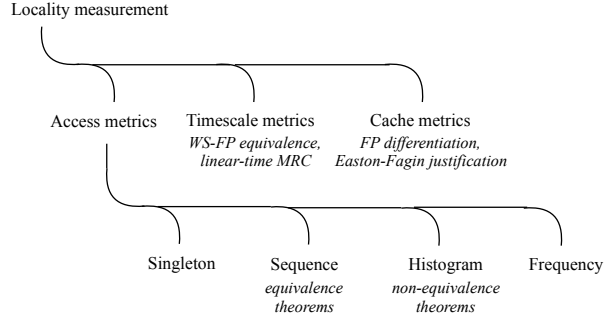}
\caption{The categories of locality measurements and new theoretical results (in italics) }
\label{fig:ct-tree}
\end{figure}

First, the paper formalizes access metrics into four second-level sub-categories based on memory addresses, access times and data reuses.  Then data reuses may be quantified by time or distance and organized as sequences (time ordered) or histograms (sorted by magnitude).  A set of properties of equivalence and non-equivalence are formally derived, including a constructive algorithm to show that the order of reuses of each data item implies the order of all reuses. These definitions and properties are used to show the strength and weakness of reuse distance histograms, the most abstract, compact and widely used metric of access locality.




Second, the paper shows the equivalence between two timescale metrics: the working-set by~\citet{DenningS:CACM72} and the footprint by ~\citet{Xiang+:PACT11,Xiang+:ASPLOS13}.  While access metrics cannot fully model program interaction in shared cache, timescale metrics can.  The mathematical properties of timescale metrics, boundedness and concavity, are trivial consequences of this equivalence.  In addition, the paper gives a much simplified explanation of the footprint formula invented by \citet{Xiang+:PACT11}.


Third, in cache metrics, the paper computes the miss ratio as the derivative of footprint~\cite{Xiang+:PACT11,Xiang+:ASPLOS13}.  Based on this footprint differentiation, it explains a formula by \citet{EastonF:CACM78} almost four decades ago, which computed the miss ratio of a larger cache from the miss ratio of all smaller caches. In addition, it gives an algorithm that is asymptotically faster than \citet{Xiang+:PACT11} for using the footprint to compute the miss ratio.

Finally, we summarize with a conversion theory that connects all the metrics in these categories. The theorems in this paper provide the pieces missing from previous work but necessary to show complete relationship. 

\subsection{Access Metrics}

An execution trace is a sequence of data items referred to by the complete execution of a program.  Each data item is represented by its memory address. The words ``sequence", ``trace" and ``execution" are used interchangeably, so are the phrases ``memory access" and ``memory address".  Hence, an execution trace is the same as a memory address sequence. We ignore any issue of granularity.  A data item may be a variable, cache block, page or object.  We define the following:

\begin{itemize}
\itemindent 20pt
\item[] $N=\mt(1\dots n)$ is a memory address trace of length $n$.
\item[] $M=\{e_1 \dots e_m \}$ is the set of $m$ data items, i.e. distinct memory addresses, accessed by the trace $N$.
\end{itemize}

This section describes three categories of access metrics, singleton, sequence and histogram, and leaves the fourth category, frequency, to Section~\ref{sec:freq}.



\subsubsection{Singleton Locality}

The simplest measurement is no measurement.  The \emph{singleton metric} ``measures" the locality of an access by the access itself.  The locality of an execution trace is the trace itself.  The name ``singleton'' is an adaptation of Lu and Scott in their formalization of determinism, which divides executions of a concurrent program into equivalence classes~\cite{LuS:DISC11}.  Like singleton determinism, singleton locality is the strictest definition of equivalence.  Two executions have the same locality if and only if they are identical. Other metrics are less restrictive, i.e. more abstract and higher level, which means a coarser partition of execution traces into equivalence sets. 


\subsubsection{Sequence Metrics}
\label{sec:seq}

The locality of an access trace may be measured by one of the following three sequences:

\begin{itemize}
\item \emph{Address independence (AI).}  The metric is a transformation of an access trace, by renaming the memory addresses to $M=\{ 1 \dots m \}$ and assigning them in order.  The memory address is $i$ if the data item is the $i$th item to first appear in the trace.  An AI metric is a trace that standardizes data-to-memory mappings. For example, two traces $abc\ abc$ and $cba\ cba$ have the same AI measure $e_1,e_2,e_3\ e_1,e_2,e_3$.  AI is more abstract than singleton.  If a program is run multiple times with the same input but different memory allocations, the singleton measure changes, but the AI measure does not.

\item \emph{Reuse time (RT) sequence.}  For each access, the reuse time is the increment of logical or physical time since the last access of
  the same datum.  The reuse time is $\infty$ if it is its first access. For a finite reuse time, the minimal is 1 and the maximum
  $n-1$.  The reuse time has been called the inter-reference interval (iri) in the working-set theory~\cite{Denning:CACM68}, inter-reference gap in LIRS~\cite{JiangZ:SIGMetrics02}, and reuse distance in StatCache and StatStack~\cite{Eklov+:HiPEAC11}.
  
\item \emph{Reuse distance (RD) sequence.}  For each access, the reuse distance is the number of distinct data accessed since the last access to the
  same datum.  The reuse distance is $\infty$ if it is its first access.  For a finite reuse distance, the minimal is 1,  because it includes the reused datum, and the maximum is $m$.  The reuse distance is the same as the LRU stack distance~\cite{Mattson+:IBM70}, which is often called stack distance in short.
\end{itemize}

For either RT or RD, the locality may be represented by the entire sequence or be broken down into per-datum sequences:

\def \pd {\emph{pd}}
 
 \begin{itemize}
 \item \emph{Per datum (PD) sequences}, which converts a trace into a set of RT or RD sub-sequences $\pd[e]=(f_e, r_2, \dots, r_{n_e})$ for each datum $e$, where $f_e$ is the time of $e$'s first access, $n_e$ the number of accesses, and $r_i$ the reuse time or reuse distance of $i$th access.  Note that $r_1=\infty$ is omitted, and naturally $\sum_{e \in M} n_{e} = n$.
 \end{itemize}

\subsubsection{Equivalence of Sequence Metrics}
\label{sec:eq}

The equivalence is shown by mutual conversions.  The conversions from AI to other sequences, AI $\rightarrow$ RT, AI $\rightarrow$ RD, RT $\rightarrow$ PD$\cdot$RT and RD $\rightarrow$ PD$\cdot$RD are straightforward, so are PD$\cdot$RT $\rightarrow$ RT and RT $\rightarrow$ AI in the reverse direction from reuse time sequences.
The remaining two conversions are from reuse distance sequences, RD $\rightarrow$ AI and PD$\cdot$RD $\rightarrow$ RD, which are shown by the next two theorems.  

\begin{theorem} \label{thm-rd2ai}
	The address-independent sequence AI can be built from the reuse distance sequence RD.
\end{theorem}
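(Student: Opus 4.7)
The plan is to give a constructive algorithm that scans the RD sequence left to right and simultaneously emits AI, maintaining an auxiliary LRU stack of the renamed data items observed so far. The stack invariant I will establish is that, after processing positions $1, \dots, i$, the stack contains exactly the distinct (renamed) items that have been accessed, ordered from most recently used (top) to least recently used (bottom). Since the paper already identifies reuse distance with LRU stack distance, this invariant is precisely what makes the construction work.

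Concretely, I would initialize an empty stack and a name counter $m \leftarrow 0$. At step $i$, if $\rd(i) = \infty$, then the access is the first to some new datum; I set $m \leftarrow m+1$, assign $\ai(i) \leftarrow m$, and push $m$ onto the top of the stack. If $\rd(i) = d$ is finite, then by the definition of reuse distance exactly $d-1$ distinct items have been accessed since the previous occurrence of this datum, so the datum sits at depth $d$ in the stack under the invariant; I set $\ai(i)$ to the element currently at depth $d$ and then move that element to the top. This mirrors the standard LRU update, which is exactly what the stack invariant requires for the next step. The renamed labels automatically satisfy the AI convention: the label $k$ is introduced precisely when the $k$th distinct datum first appears.

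Correctness would be proved by induction on $i$. The base case is vacuous. For the step, the two cases above preserve the invariant: a first-access pushes a genuinely new label, and a reuse moves the correct existing label to the top, leaving all items above it shifted down by one position, which is the LRU update. Well-definedness of the finite case needs that the stack depth is at least $d$ when $\rd(i)=d$, which follows because $d$ distinct data items (including the reused one) were accessed since its previous occurrence, and by induction all are present in the stack. The only subtle point, and the place where I would be most careful, is confirming that the LRU promotion faithfully matches the semantics of ``distinct accesses since the last reference,'' so that the depth-$d$ element after promotion still reflects the true set of intervening distinct data at the next step; this is a direct consequence of the equivalence between reuse distance and LRU stack distance recalled in Section~\ref{sec:seq}, and so requires no additional machinery.
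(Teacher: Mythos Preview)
Your construction is correct and is essentially the same as the paper's own proof: both drive an LRU stack by the RD sequence, push a fresh label on $\infty$, and on a finite value $d$ promote the depth-$d$ element to the top, emitting the top of the stack as the AI trace. Your write-up adds an explicit invariant and induction, but the algorithm and the idea coincide with the paper's.
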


\begin{proof}
The RD sequence is used to drive an LRU stack. When the reuse distance is $\infty$, a new data item $i$ is created and placed on top of the stack (first position).  At a finite reuse distance $x$, the data item at stack position $x$ is moved to the top, and the items in positions $1 \dots x-1$ are moved down by one position.  The AI trace is the sequence of data items that appear at the top position of the stack.
\end{proof}

The construction of an AI trace is more difficult from per datum (PD) reuse distances, because the order of reuses between data items is lost in the PD conversion.

\begin{theorem}\label{thm-rds2trace} 
	The AI trace can be built from per datum reuse distances PD $\cdot$ RD.
\end{theorem}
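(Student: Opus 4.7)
The plan is to reduce this to Theorem~\ref{thm-rd2ai}: once the full RD sequence in time order can be recovered from PD$\cdot$RD, that theorem yields the AI trace directly. Since each $\pd[e] = (f_e, r_2, \dots, r_{n_e})$ starts with the first-access time $f_e$, the positions of all first accesses in the trace are known outright; the only remaining work is to schedule, for every other time step, which datum is being reused and at which recorded reuse distance.

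The algorithm maintains an LRU stack and processes time steps $i = 1, \dots, n$. If $i = f_e$ for some $e$, push $e$ to the top and emit $\infty$ as the RD at time $i$. Otherwise, scan every stack element $e$ whose $\pd[e]$ sequence still has an unused reuse distance, locate those whose next unused value equals the element's current stack position, and choose the one at the smallest such position; move that element to the top and emit its matched value. The emitted sequence is the RD sequence, which is then fed into Theorem~\ref{thm-rd2ai} to obtain the AI trace.

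Correctness rests on a monotonicity property of the LRU stack: accessing the element at position $k$ shifts only the $k - 1$ elements at positions $1, \dots, k - 1$ down by one while leaving positions $> k$ untouched, whereas a first access shifts everybody down by one. Consequently the stack position of any element never decreases between consecutive accesses to it. Arguing by induction on $i$, suppose the simulated stack coincides with the stack of the source trace at the end of step $i - 1$; at step $i$ the datum $e^*$ actually accessed in the source sits at a position equal to its next unused recorded RD, so at least one match exists. If a hypothetical candidate $e'$ had a strictly smaller position than $e^*$ and still matched, then by monotonicity $e'$'s position at the time of its next source access would strictly exceed its recorded RD, contradicting consistency of PD$\cdot$RD. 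So $e^*$ is the unique shallowest match and the algorithm reproduces the source's choice. The main technical obstacle is precisely this monotonicity/uniqueness argument; once it is in place, the induction delivers the full RD sequence and Theorem~\ref{thm-rd2ai} completes the construction.
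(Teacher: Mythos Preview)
Your argument is correct, and at its core it coincides with the paper's: both reconstruct the trace greedily, time step by time step, using a recency-based tie-breaking rule. The paper's Algorithm~\ref{alg-performance-memorytracerebulid} maintains, for each datum $e$, an estimated next-access time $nextpos[e]$ (initialized to the last access time plus the next reuse distance, then incremented whenever a more-recently-accessed datum recurs); among data with $nextpos[e]=i$ it picks the one with the largest $lastpos$. Your formulation maintains the LRU stack explicitly and, among data whose next recorded reuse distance equals their current stack depth, picks the shallowest. These two selection rules are equivalent, since ``largest $lastpos$'' is exactly ``smallest stack position.''

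Where your write-up adds value is the correctness argument. The paper presents the algorithm and explains it informally but offers no proof that the recency tie-break is forced. Your monotonicity observation---that an element's stack depth can only weakly increase between two of its own accesses, because any intervening access at depth $\ge p$ leaves depth $p$ fixed while any access at depth $>p$ (or a first access) pushes it deeper---cleanly yields uniqueness: a hypothetical shallower candidate would be pushed strictly past its recorded reuse distance before it is next accessed in the source trace. This is exactly the missing justification for the paper's ``recency choice,'' and the stack framing makes it transparent. The final reduction to Theorem~\ref{thm-rd2ai} is a harmless detour; since you already maintain the stack, the AI trace is just the sequence of stack tops.
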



\begin{proof}
Algorithm~\ref{alg-performance-memorytracerebulid} gives the conversion PD $\cdot$ RD $\rightarrow$ AI, which proves the theorem.
\end{proof}

\begin{algorithm}
\caption{PD$\cdot$RD $\rightarrow$ AI conversion}
\label{alg-performance-memorytracerebulid}
 $lastpos[1\dots m]\leftarrow pd[1\dots m][1]$
 $nextpos[1\dots m]\leftarrow pd[1\dots m][1]$
 $cnt[1\dots m]\leftarrow \{1\}$
\For{$i=1\TO n$}{     
	 $e\leftarrow 0$

	\For{$e'=1\TO m$}{
		\If{$nextpos[e'] = i\,\,\&\&\,\, (e = 0\,\,||\,\,lastpos[e]<lastpos[e'])$}{
			 $e\leftarrow e'$
		}
	}
	\For{$e'=1\TO m$}{
		\If{$lastpos[e']  < lastpos[e]$}{
				 $nextpos[e']\leftarrow nextpos[e']+1$
		}
	}

	 $ai[i] \leftarrow e$
	 $cnt[e]\leftarrow cnt[e]+1$
	 $lastpos[e]\leftarrow  i$
	 $nextpos[e]\leftarrow  i+pd[e][cnt[e]]$
}
\end{algorithm}

The main loop of Algorithm~\ref{alg-performance-memorytracerebulid}, starting at Line
4, constructs the AI trace $\ai[1\dots n]$
by selecting the datum $e$ accessed at each time $i$.  
Lines 1 to 3 initialize the auxiliary data: the last access time $lastpos[e]$ is the time of
$e$'s last access before $i$, $nextpos[e]$ the estimated time of its
next access, the access count $cnt[e]$ the number of times $e$ has
been accessed.  Initially for each datum $e$, the first access is $f_e$, and its
access count $cnt[e]=1$.

The main loop has two inner loops: the selection loop and the update
loop.  The selection loop, Lines 6 to line 10, chooses $e$ for $ai[i]$
if its estimated next access time is $i$.  There may be multiple
choices.  The selection loop does not stop at the first such datum.  It finds every such item and chooses the one with the largest last access time.  This is a choice based on recency, i.e. most recent last access.  Naturally, this choice is unique. 

The update loop is the second inner loop. Lines 11 to 15 update $nextpos$
for all other elements $e'$. If $e$ has been accessed after the last $e'$, the $e$ access is a recurrence,
so the estimated next access time of $e'$ is increased by 1.  
Then, Lines 16 to 19 update for $e$: the current access is now the last
access, the access count $cnt[e]$ is increased by 1, and the next access time is
estimated to be the current time plus the next reuse distance $\pd[e][cnt[e]]$.  
\medskip

The PD$\cdot$RD $\rightarrow$ AI conversion has two requirements in addition to per datum reuse distances.  First, the recency choice is necessary.  Consider the AI trace $(e_1,e_2,e_3,e_2,e_1)$.  When time $i=4$, the next access times of $e_1,e_2$ are both estimated as 4.  The selection loop must choose $e_2$, which is more recently accessed.  Second, the first access time is necessary. Consider two AI traces $e_1,e_1,e_2,e_3,e_2$ and $e_1,e_2,e_1,e_3,e_3$ that have the same per-datum reuse distances.  Without the first-access times, no algorithm can distinguish between them.





\subsubsection{Histogram Metrics}
\label{sec:hist}


The histogram construction (HI) produces two types of histograms:

\begin{itemize}
\item The RT histogram $\textit{rt}(i)$, which counts the number of reuse times that equal to $i$, $i = 1, \dots, n-1, \infty$ and $0 \le \textit{rt}(i) \le n$.

\item The RD histogram $\textit{rd}(i)$, which counts the number of reuse distances that equal to $i$, $i = 1, \dots, m, \infty$ and $0 \le \textit{rd}(i) \le n$.
\end{itemize}

HI conversion loses all information about memory address, access time, and order of reuses.  Instead, it sorts reuses by their time or distance.  A histogram can be interpreted as a probability distribution.  

\begin{itemize}
\item The probability function $P(x \le y) = \frac{\Sigma_{i=1}^y x(i)}{n}$, where $x$ may be $\textit{rt}$ or $\textit{rd}$, and $0
  \le P(x \le y) \le 1$.
\end{itemize}

The reuse time histogram was called the interreference density, and its probability function the interreference distribution~\cite{DenningS:CACM72}.  The reuse distance histogram was called the locality signature~\cite{Zhong+:TOPLAS09}.  \citet{Gupta+:JPDC13} used both to define locality as the probability of reuse, where the two types of histograms give the likelihood of reuse in next-n-addresses and next-n-unique-address.

We show two invariances of the reuse time histogram.  

	\begin{lemma} \label{lemma-1}$\sum\limits_{i=1}^{n-1}\textit{rt}(i) = n-m$
	 \end{lemma}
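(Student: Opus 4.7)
The plan is to observe that the sum on the left-hand side simply counts the total number of accesses whose reuse time is finite. Each of the $n$ positions in the trace $N = \mt(1 \dots n)$ has a well-defined reuse time, which is either some integer in $\{1, \dots, n-1\}$ or $\infty$. Since $\textit{rt}(i)$ is defined as the number of positions whose reuse time equals $i$, the sum $\sum_{i=1}^{n-1} \textit{rt}(i)$ counts exactly those positions with a finite reuse time, i.e., positions whose access is not the first access of its datum.

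Next I would argue that the number of positions with infinite reuse time is exactly $m$. By the definition in Section~\ref{sec:seq}, the reuse time at position $i$ is $\infty$ if and only if that access is the first access of the corresponding datum $e \in M$. Since $M = \{e_1, \dots, e_m\}$ consists of exactly $m$ distinct data items, and each appears for the first time at a unique position in the trace, there are precisely $m$ positions with $\textit{rt} = \infty$. (This is also reflected in the per-datum decomposition, where each $\pd[e]$ contributes one $f_e$ term.)

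Combining the two observations, the $n$ positions partition into $m$ positions with infinite reuse time and $n - m$ positions with finite reuse time, so $\sum_{i=1}^{n-1} \textit{rt}(i) = n - m$. There is no real obstacle here; the lemma is essentially a bookkeeping identity, and the only care needed is to make sure the convention ``$\textit{rt}(\infty) = m$ accounts for the first accesses'' is stated explicitly so the reader sees why the finite-valued bins must sum to $n - m$.
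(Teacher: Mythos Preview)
Your argument is correct. The paper states Lemma~\ref{lemma-1} without proof, treating it as an evident invariance of the reuse time histogram; your counting argument (each of the $n$ accesses has reuse time either $\infty$, occurring exactly once per datum and hence $m$ times, or a finite value in $\{1,\dots,n-1\}$) is precisely the natural justification the paper leaves implicit.
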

	 
	\begin{lemma} \label{lemma-2}$\sum\limits_{i=1}^{n-1}i\times \textit{rt}(i) = \sum\limits_{e=1}^m (\lte-\fte)$
	 \end{lemma}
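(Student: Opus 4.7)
The plan is to prove both lemmas by straightforward counting arguments over the per-datum decomposition of the trace (exactly the PD sequences introduced in Section~\ref{sec:seq}).

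For Lemma~\ref{lemma-1}, I would argue that each of the $n$ accesses in the trace contributes exactly one entry to the reuse time histogram (either a finite value in $\{1,\dots,n-1\}$ or $\infty$). By definition, an access has reuse time $\infty$ if and only if it is the first access to its datum, so the $\infty$ bin contains exactly one count per distinct datum, i.e.\ $\textit{rt}(\infty)=m$. Consequently $\sum_{i=1}^{n-1}\textit{rt}(i)$ counts the remaining accesses, which number $n-m$.

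For Lemma~\ref{lemma-2}, I would reinterpret the left-hand side as the sum of all finite reuse times and then regroup by datum. Fix a datum $e$ with access times $t_1^e<t_2^e<\dots<t_{n_e}^e$, where $t_1^e=\fte$ and $t_{n_e}^e=\lte$. By the definition of reuse time, the finite reuse times contributed by accesses to $e$ are exactly the differences $t_{j+1}^e-t_j^e$ for $j=1,\dots,n_e-1$, and these telescope to $t_{n_e}^e-t_1^e=\lte-\fte$. Since each finite reuse time in the histogram comes from exactly one datum, summing over all $e\in M$ rearranges $\sum_{i=1}^{n-1} i\cdot\textit{rt}(i)$ into $\sum_{e=1}^{m}(\lte-\fte)$.

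There is no real obstacle: both statements reduce to bookkeeping. The only subtlety worth flagging is the boundary case of a datum accessed only once ($n_e=1$), for which $\fte=\lte$ and the telescoping sum is empty; in this case the datum contributes $0$ to the right-hand side of Lemma~\ref{lemma-2} and only an $\infty$ entry (not counted in Lemma~\ref{lemma-1}'s sum) to the histogram, so both identities remain consistent. I would state Lemma~\ref{lemma-1} first and reuse its counting observation (the $m$ infinite reuse times) implicitly in the setup of Lemma~\ref{lemma-2}.
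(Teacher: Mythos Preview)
Your argument is correct: rewriting $\sum_{i} i\cdot\textit{rt}(i)$ as the sum of all finite reuse times and then grouping by datum so that the consecutive-access differences telescope to $\lte-\fte$ is exactly the right bookkeeping, and your treatment of the $n_e=1$ case is the only edge that needs mention. The paper itself states Lemma~\ref{lemma-1} and Lemma~\ref{lemma-2} as basic invariances without proof, so your explicit per-datum telescoping is precisely the omitted justification rather than a different route.
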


From the proof of the Theorem \ref{thm-rds2trace},
if an access of element $e$  reuse distance $i$,
and all the $m-i$ elements except the $i$ distinct elements
have already shown in the trace before and will be accessed again after,
their $nextpos$ will be increased by one,
which means the next access of the $m-i$ elements
will have an repetitive access of $e$.
We define \textbf{repetitive access} $rep$ as:
\[rep(i) = \textit{rd}(m-i)\]

The repetitive access histogram is the reversal of reuse distance histogram.

We define the \textbf{sealed memory trace}, if the first $m$ and
last $m$ elements are permutations of all $m$ elements of the trace's
data set.  Namely, for every element $e$, $\fte\le m$ and
$\lte\ge n-m+1$.  In a sealed memory trace of $n$ accesses to
$m$ data, there are $n-m$ reuses.  The following theorem shows a
property of the total reuse time, $\sum_{i=1}^{n-1}i\times \textit{rt}(i)$.  It
shows that in a sealed memory trace, the average reuse time is $m$.  The proof is
based on the observation that the reuse time, $m$, comes from either
reuse distance or repetitive access.

	\begin{theorem} \label{thm-normal-rt}
	The average reuse time of a sealed memory trace is $m$.
	\[\sum_{i=1}^{n-1}i\times \textit{rt}(i) = m (n-m)\]
	 \end{theorem}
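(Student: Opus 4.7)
The plan is to reduce the theorem to Lemma~\ref{lemma-2} and then exploit what ``sealed'' means in terms of the first-access and last-access times. Lemma~\ref{lemma-2} already gives
\[
\sum_{i=1}^{n-1} i \times \textit{rt}(i) \;=\; \sum_{e=1}^m (\lte - \fte),
\]
so the problem is purely combinatorial: compute $\sum_e \lte - \sum_e \fte$ under the sealed-trace hypothesis. I would not try to use the repetitive-access identity $rep(i) = \textit{rd}(m-i)$ here; while the paper's remark suggests that path, going through Lemma~\ref{lemma-2} is strictly shorter.

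Next I would unpack the definition of a sealed trace. The hypothesis states that for every $e$, $\fte \le m$ and $\lte \ge n-m+1$. Since the $m$ distinct values $\fte$ all lie in $\{1,\dots,m\}$ and the map $e \mapsto \fte$ is injective (two elements cannot have the same first-access time), a pigeonhole argument forces
\[
\{\fte : e = 1,\dots,m\} \;=\; \{1,2,\dots,m\}.
\]
The symmetric argument applied to the last-access times forces
\[
\{\lte : e = 1,\dots,m\} \;=\; \{n-m+1,\,n-m+2,\,\dots,\,n\}.
\]
This is the one step that needs a sentence of justification; everything else is arithmetic.

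Finally I would carry out the arithmetic with two standard arithmetic-series sums:
\[
\sum_{e=1}^m \fte \;=\; \sum_{k=1}^m k \;=\; \frac{m(m+1)}{2}, \qquad
\sum_{e=1}^m \lte \;=\; \sum_{k=n-m+1}^n k \;=\; m(n-m) + \frac{m(m+1)}{2},
\]
subtract, and conclude $\sum_e(\lte - \fte) = m(n-m)$. Combined with Lemma~\ref{lemma-2}, this gives the claimed identity, and dividing by the number of reuses $n-m$ (guaranteed positive whenever the trace contains any reuse) yields the stated average reuse time of $m$.

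The only genuine obstacle is the pigeonhole step, and even that is routine; the rest is bookkeeping. If one wanted the proof sketch the paper hints at, one would instead split each individual reuse time into its reuse-distance part plus its repetitive-access part and sum, using $rep(i) = \textit{rd}(m-i)$ to cancel terms, but that route requires establishing the decomposition formally and is not shorter than the Lemma~\ref{lemma-2} route described above.
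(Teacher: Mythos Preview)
Your argument is correct, and in fact the paper itself gives exactly this derivation in an appendix (labeled ``Lemma~\ref{lemma-2} to Theorem~\ref{thm-normal-rt}''): apply Lemma~\ref{lemma-2}, observe by pigeonhole that $\{\fte\}=\{1,\dots,m\}$ and $\{\lte\}=\{n-m+1,\dots,n\}$, and subtract the two arithmetic series.

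The proof that appears \emph{in the main text}, however, is the repetitive-access route you explicitly set aside. There the identity is first established for a special sealed trace in which every element's final reuse distance is exactly $m$ (so its $rep$ contribution is zero); for that trace one writes $\sum i\,\textit{rt}(i)=\sum i\,\textit{rep}(i)+\sum i\,\textit{rd}(i)$, substitutes $\textit{rep}(i)=\textit{rd}(m-i)$, and the cross terms telescope to $m\sum \textit{rd}(i)=m(n-m)$. The general sealed case is then reduced to the special one by an extension-and-truncation trick: append a copy of the last $m$ accesses, apply the special-case result to the length-$(n+m)$ trace, and subtract the tail contribution $m^2$. Your Lemma~\ref{lemma-2} route is strictly shorter and avoids both the case split and the auxiliary trace; what the main-text argument buys is an explanation of \emph{why} the total reuse time decomposes as reuse-distance plus repetitive-access contributions, which is the conceptual point the surrounding discussion is building toward.
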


\begin{proof}
It is obvious that if the $rep$ of an access in $ai(m+1\dots n-m)$ is $i$,
it contributes $i$ repetitive accesses to the total $rt$.
This is true because all the $m$ elements are shown before and after this access.
However, this is not true for the last $m$ accesses, since not all elements will be accessed again.
Let's first examine a particular sealed memory trace, where
the reuse distances of all elements' last accesses are $m$ and their $rep$ is 0 consequentially.

Thus, for this particular sealed memory trace, besides $rd$s, all $rep(i)$ contributes $i\times rep(i)$ to $rt$. We have:

{\setlength\arraycolsep{2pt}
\begin{eqnarray}
\sum_{i=1}^{n-1}i\times \textit{rt}(i)  &=& \sum_{i=1}^{m}i\times \textit{rep}(i) + \sum_{i=1}^{m} i\times \textit{rd}(i) \nonumber\\
&=& \sum_{i=1}^{m}i\times \textit{rd}(m-i) + \sum_{i=1}^{m} i\times \textit{rd}(i) \nonumber\\
 &=& \sum_{i=1}^{m}(m-i)\times \textit{rd}(i) + \sum_{i=1}^{m} i\times \textit{rd}(i) \nonumber\\
 &=& m \sum_{i=1}^{m}\textit{rd}(i) = m(n-m)\nonumber
\end{eqnarray}}

Then, we can show that if the last $m$ elements are permuted,
it doesn't change the average reuse time.
There is an easy way to prove this.
Let's make a new memory trace $ai'(1\dots n+m)$,
whose first $n$ accesses $ai'(1\dots n)$ 
is $\textit{ai}(1\dots n)$.
The tail side of $ai'$ duplicate the corresponding
tail side of $ai$. Thus the new trace 
satisfies the requirements mentioned above and
its total reuse time is $m(n+m-m)$.
Now let's remove tail side of $ai'$.
Since $ai'(n+1\dots n+m)=ai'(n-m+1\dots n)$,
the total reuse time of $ai'(n+1\dots n+m)$,
which should be decreased, is $m^2$.
The total reuse time of the original trace is therefore $m(n+m-m)-m^2=m(n-m)$.
\end{proof}

\subsubsection{Types of Histograms}
\label{sec:hist-types}


A histogram is more space efficient than a sequence.  The histogram
construction can be viewed as having two steps: sorting the accesses
by their locality value, e.g. reuse time or reuse distance, and then
counting the number of accesses with the same value.  For a greater
saving, a third step is to group consecutive locality values into a
single bin.  Instead of one counter for each locality value, one
counter is used for a range of values.  The locality range is an
approximation but it bounds the maximal error.  If we grow the range
exponentially, we reduce the size of the histogram logarithmically
while ensuring a relative precision.  An example is the log-linear histogram, where the range of values grows by powers of two: 1, 2,
3--4, 5--8, etc.  The large ranges, e.g. 1025--2048, are evenly
divided into a fixed number of smaller ranges,
e.g. 256~\cite{Xiang+:PACT11}.  The asymptotic space cost is
logarithmic, and the approximation is equivalent to recording the most
significant digits of locality values.

Histogram locality can be stored in constant space. \citet{Zhong+:TOPLAS09} sorted program accesses by locality and then divided them into
equal-size groups, for example, 1000 groups each containing 0.1\% of
memory accesses.  This grouping limited the
effect of error from any single group.  The imprecision came from
the spread of locality values within a group.  \citet{MarinM:SIGMETRICS04} controlled the locality spread by recursively dividing
a group until its range of values was within a
limit.  \citet{Fang+:PACT05} improved the precision
in coarse-grained histograms, i.e. large spreads, by approximating it
with a distribution.  They showed that the linear
distribution was a more effective approximation than the uniform
distribution.

Table~\ref{tbl:hist} compares the space requirement of locality
sequences and histograms.  While the sequence locality takes linear
space and cannot be approximated, the histogram locality takes either
logarithmic or constant space, with controlled loss of information as
discussed in this section.


\begin{table}[h]
\tbl{Space requirements of sequence and histogram locality.\label{tbl:hist}}{
\centering
\begin{tabular}{|c|c||c||c|c|} 
\hline
\multicolumn{2}{|c|}{} & RT/RD sequences & RT histogram $\textit{rt}(w)$ & RD histogram $\textit{rd}(v)$\\ \hline \hline
\multicolumn{2}{|c|}{indexing parameter} & time $t \in [1\dots n]$  & interval $w \in [0\dots n]$ & volume $v \in [1\dots m]$ \\ \hline
space & accurate & $O(n)$ & $O(n)$  & $O(m)$\\ \cline{2-5}
complexity & approx. & $O(n)$ & $O(\log n)$, $O(1)$ & $O(\log m)$, $O(1)$ \\ \hline
\end{tabular}}
\end{table}

\subsubsection{Strength and Limitations}



By definition, locality is essentially a pattern of reuse. The metrics in this section represent data reuse with different levels of abstraction.  Singleton traces use exact memory addresses.  AI traces use abstract memory addresses.  Reuse distance and reuse time dispense with the address of reuses but still retain their order.  The reuse distance histogram is the most abstract and compact because it removes all information about the memory address, the access time and the order of reuses.  
This high level of abstraction has both strengths and limitations. 

In many important problems, the reuse distance histogram is an adequate and the most compact measure of locality.  In cache analysis, it gives the miss ratio of the fully associative cache~\cite{Mattson+:IBM70}, direct-mapped or set-associative cache~\cite{Smith:ICSE76,Nugteren+:HPCA14}, and cache with other reuse-based replacement policies~\cite{SenW:SIGMETRICS13} of all sizes.  It is used to separate the locality effect by the program structure~\cite{MarinM:SIGMETRICS04} and the load/store operation~\cite{Fang+:PACT05}, model the change of locality as a function of the input~\cite{Zhong+:TOPLAS09,MarinM:SIGMETRICS04,Fang+:PACT05} and the degree of parallelism~\cite{WuY:PACT11}, and predict the performance of different cache designs and parameters~\cite{Zhong+:TOC07,Wu+:ISCA13}, making it the most widely used metric of access locality.

However, there are two limitations. The first problem occurs when analyzing program interaction in shared cache. \citet{Xiang+:PACT11,Ding+:JCST14} gave an example showing that when two program traces are interleaved into a single trace and the exact interleaving is known, e.g. uniform interleaving, we could not infer the reuse distance histogram of the interleaved trace from the reuse distance histograms of the individual traces. In other words, we cannot compute the combined locality from those of the components.  On modern multicore processors where cache is increasingly shared, this lack of composability is a serious limitation.


Interestingly, all other access metrics are composable. When the method of interleaving is given, an interleaved AI trace can be easily constructed from individual AI traces.  From the equivalence theorems, all other sequence metrics, the reuse distance and reuse time sequences and their per datum sequences, are composable.  Moreover, the reuse time histogram is composable: the reuse time histogram of an interleaved trace is the sum of the reuse time histograms of the individual traces, if all reuse times are normalized to include the effect of interleaving.  

The second limitation of histograms is the loss of information about phase behavior. \citet{BatsonM:SIGMETRICS76} and \citet{Shen+:JPDC07} used reuse distances to capture and characterize program phases.  While the loss of phase information is in both types of histograms, the loss of composability is only for the reuse distance histogram. The difference in composability is another demonstration of the non-equivalence between the reuse time and reuse distance histograms. It also shows that the second limitation is not the cause of the first.  




Next we introduce a set of locality metrics which are both composable and compact.  We will use the reuse time histogram not as a metric of locality but the basis to derive other locality metrics.

\subsection{Timescale Metrics}
\label{sec:timescale}

A timescale is a length of time, which may be measured in seconds or years in physical time or number of memory accesses in logical time.  A timescale metric is a mathematical function $f(x)$, where $x$ ranges across all timescales, i.e. $x \ge 0$.



\subsubsection{Footprint}

\def \reuse {\textit{reuse}}

In an execution, every consecutive sub-sequence of accesses is a time window, formally as $(t,x)$, where $t$ is the end position and $x$ the window length.  The number of distinct elements in the window is the \emph{working-set size} $\omega(i,x)$~\cite{Denning:CACM68}.  For a length $x$, the footprint $\fp(x)$ is the average working-set size, computed by the total working-set size divided by the number of length-$x$ windows:

\begin{align}
\label{eq:fp-def}
\fp(x) = \frac{1}{n-x+1}\sum_{t=x}^{n} \omega(t,x)
\end{align}

\noindent The footprint measures the average working-set size in all timescales and shows the growth of program working set over time.

\subsubsection{Computing the Footprint}
\label{sec:rt2fp}

Xiaoya Xiang gave the following formula to compute the footprint from reuse times and the times of first and last accesses~\citep{Xiang+:PACT11}.  

\predisplaypenalty=10000
\postdisplaypenalty=1549
\displaywidowpenalty=1602

\begin{align}
\label{eq:xiang}
\fp(x)& = m - \frac{1}{n-x+1} \bigg( \sum_{i=x+1}^{n-1}(i - x) \rt(i) \nonumber \\
 &+\sum_{k=1}^{m}(f_k - x)I(f_k>x) \nonumber \\
 &+\sum_{k=1}^{m}(l_k-x)I(l_k>x) \bigg)
\end{align}

\noindent The symbols in the Xiang formula are:
\begin{itemize}
\item $\rt(i)$: the number of accesses whose reuse time is $i$.
\item $f_k$: the first access time of the $k$-th datum (counting from 1).
\item $l_k$: the \emph{reverse} last access time of the $k$-th datum.
  If the last access is at position $x$, $l_k = n+1-x$, that is, the
  first access time in the reverse trace (counting from 1).
\item $I(p)$: the predicate function equals to 1 if $p$ is true; otherwise 0.
\end{itemize}

\citet{Xiang+:PACT11} used two pages in their paper to derive the formula based on ``differential counting" of how the working set changes over successive windows.  Next is a new, shorter explanation.  The idea is ``absence counting'', by starting with assumption of all data in all windows and then counting all absences and subtracting their effects.  For people who have filed income tax in the United States, taking deductions is a familiar process.


The first deduction is based on data reuses.  If a reuse time $i$ is greater than $x$, there are $i-x$ windows of length $x$ that do not access the reused datum.  The working-set size should be reduced by $i-x$ to account for this absence.  The total absence from all reuses is $\sum_{i=x+1}^{n-1}(i-x) rt(i)$.

The next two deductions follow a similar rationale.  If the $k$th datum is first accessed at time $f_k$ and $f_k > x$, it is absent in the first $f_k - x$ windows of length $x$.  Similarly, if it is last accessed at $l_k$ counting backwards and $l_k>x$, it is absent in the last $l_k-x$ windows of length $x$.  The total adjustment are shown by the last two terms of the Xiang formula.



\subsubsection{The Denning-Schwartz Formula}
\label{sec:ds}


The first timescale metric of locality is the average working-set size (WSS) $s(x)$ formulated by \citet{DenningS:CACM72}.\footnote{Although both define average WSS, mathematically footprint in Eq.~\ref{eq:fp-def} differs from Denning and Schwartz in Eq.~\ref{eq:denning-schwartz}.}  The Denning-Schwartz formula is inductive: the WSS at $x$ is the WSS at $x-1$ plus the miss ratio.  In the base case, we have an empty working set $s(0) = 0$ and 100\% miss ratio $m(0)=1$.  At window length $x$, an access is a miss if its reuse time $t$ is greater than $x$, that is, $m(x) = P(t>x)$.  This type of miss ratio is called the \emph{time-window miss ratio}.

\begin{align}
\label{eq:denning-schwartz}
s(x) = s(x-1) + m(x-1) = \sum_{i=0}^{x-1} m(i) = \sum_{i=0}^{x-1} P(\rt>i) 
\end{align}


\sout{Unlike footprint, in particular Eq.~\ref{eq:fp-def}, Eq.~\ref{eq:denning-schwartz} is not directly related to WSS or reuses in individual windows.  }The mathematical constructions of footprint and average working set differ: Denning-Schwartz formula is additive, while Xiang is subtractive. Next, we show an underlying equivalence.

\subsubsection{Steady-state Footprint}
\label{sec:timescale-props}

\def \ssfp {\textit{ss-fp}}

Steady-state footprint is the average WSS not considering the effect of trace boundaries.  In the Xiang formula, the first and last access times affect two of the terms.  If we drop these two terms and use $n$ as the window count, we say that the revised formula computes the \emph{steady-state footprint}  $\ssfp(x)$:

\begin{align}
\label{eq:ssfp}
\ssfp(x) = m - \frac{\sum_{i=x+1}^{n-1}(i - x) \rt(i)}{n}  = m - \sum_{i=x+1}^{n-1}(i - x) P(\rt = i)
\end{align}

If a trace is infinitely long $n=\infty$, the footprint is $\lim_{n \to \infty} \fp(x)$.  It is easy to see that the limit footprint is the steady-state footprint when $n \to \infty$.

\begin{align}
\label{eq:fp-inf}
\lim_{n \to \infty} \fp(x) & = \ssfp(x) = m - \sum_{i=x+1}^{\infty}(i - x) P(\rt = i)
\end{align}

\begin{theorem}
(\emph{WS-FP Equivalence})
\label{thm-ws-fp-eq}
The Denning-Schwartz formula computes the steady-state footprint, i.e. $s(x) = \ssfp(x)$ for all $x \ge 0$.
\end{theorem}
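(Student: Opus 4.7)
The plan is to reduce the claimed equality to the average-reuse-time identity established in Theorem~\ref{thm-normal-rt}. First, I would apply a tail-sum identity to rewrite the Denning-Schwartz formula as a single sum over the reuse-time distribution:
\[
s(x) = \sum_{i=0}^{x-1} P(\rt > i) = \sum_{j \ge 1} \min(x, j)\, P(\rt = j),
\]
which follows by swapping the order of summation: each term $P(\rt = j)$ is counted once for every $i$ with $0 \le i < \min(x, j)$.

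Next, I would massage the steady-state footprint into the same shape. Using $i - x = i - \min(i, x)$ when $i > x$ and zero otherwise,
\[
\ssfp(x) = m - \sum_{i > x}(i - x)\, P(\rt = i) = m - \sum_{i \ge 1} i\, P(\rt = i) + \sum_{j \ge 1} \min(j, x)\, P(\rt = j).
\]
Comparing the two displays, $s(x) = \ssfp(x)$ reduces to the single identity $\sum_{i \ge 1} i\, P(\rt = i) = m$, which is exactly Theorem~\ref{thm-normal-rt} after normalizing the count $\sum_i i\cdot \rt(i) = m(n-m)$ by the number $n-m$ of reuses. Once this identity is in hand, the equality holds for every $x \ge 0$ simultaneously, with no induction required.

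The main obstacle, as I see it, is making sure the probability $P$ is interpreted consistently on both sides of the equality. In the steady-state regime we are dropping the boundary deductions that appear in the full Xiang formula, so the relevant distribution is the empirical distribution of the $n-m$ finite reuse times; mass at $\infty$ (the first accesses) must be excluded from both tail-sums or the identity $\sum_i i\,P(\rt = i) = m$ fails. Under this convention both formulas have the same base value $s(0) = \ssfp(0) = m - m = 0$, and the same per-step increment $s(x) - s(x-1) = \ssfp(x) - \ssfp(x-1) = P(\rt \ge x)$, which I would include as an alternative inductive verification to cross-check the swap-of-summation argument.
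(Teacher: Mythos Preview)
Your route differs from the paper's. The paper gives a two-line induction: it asserts the base case $s(0)=\ssfp(0)=0$ and then checks that both sides increase by the same amount $P(\rt>x)$ at each step. Your approach instead rewrites both sides via the tail-sum identity and reduces the whole claim to the single constant-in-$x$ identity $\sum_{i\ge 1} i\,P(\rt=i)=m$. That reduction is correct and arguably more illuminating, since it isolates exactly the one fact on which the equivalence hinges. Your closing ``inductive cross-check'' is, in fact, the paper's entire proof.

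The weak spot is the appeal to Theorem~\ref{thm-normal-rt}. That theorem is stated only for \emph{sealed} traces and yields $\sum_i i\cdot\rt(i)=m(n-m)$; with the paper's normalization $P(\rt=i)=\rt(i)/n$ this gives $\sum_i i\,P(\rt=i)=m(n-m)/n$, not $m$. Your fix---renormalizing by $n-m$---silently changes both formulas being compared, so you end up proving a slightly different statement. In the steady-state limit $n\to\infty$ the two normalizations coincide and the sealed/non-sealed distinction washes out, so your argument is fine in the regime the theorem actually targets; but it does not establish the identity for a general finite trace, and Theorem~\ref{thm-normal-rt} is not the right lever for that anyway. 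Note that the paper is no more rigorous on this point: its base case $\ssfp(0)=0$ \emph{is} the identity $\sum_i i\,P(\rt=i)=m$, simply asserted. So you have not introduced a gap the paper avoids---you have just made the hidden assumption explicit, which is a virtue, even if the route through Theorem~\ref{thm-normal-rt} is not quite the clean justification you want.
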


\begin{proof}
The equivalence is proved by induction.  In the base case, $s(0)= \ssfp(0) = 0$.  Assuming $s(x) = \ssfp(x)$, we see they increase by the same amount 

\begin{align*}
s(x+1) - s(x) = & \sum_{i=0}^{x+1} P(\rt>i) - \sum_{i=0}^{x} P(\rt>i) = P(\rt > x)
\end{align*}

\begin{align*}
\ssfp(x+1) - \ssfp(x) = & \ m - \sum_{i=x+2}^{n}(i - x - 1) P(\rt = i) \\
  & - (m - \sum_{i=x+1}^{n}(i - x) P(\rt = i) ) \\
  = & \ P( \rt > x) \\
\end{align*}

\noindent Hence, $s(x+1) = \ssfp(x+1)$, and the equivalence holds for all $x \ge 0$.
\end{proof}

Consider an example trace $abc\ abc\ \dots$  We have $P(\rt = i) = 1$ for $i=3$ and $P(\rt = i) = 0$ otherwise.  The steady-state footprint, computed by either Eq.~\ref{eq:denning-schwartz} or Eq.~\ref{eq:ssfp}, is $\ssfp(w)=0, 1, 2$ for $x=0, 1, 2$ and 3 for $x \ge 3$.

Because of the equivalence, we can easily prove the following:

\begin{theorem}
\label{thm:ssfp-concavity}
$\ssfp(x)$ is bounded and concave.
\end{theorem}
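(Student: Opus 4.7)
The plan is to reduce everything to the additive Denning-Schwartz representation via Theorem \ref{thm-ws-fp-eq}, which converts both claims into statements about the sequence of miss probabilities $m(x)=P(\rt>x)$. Once the problem is phrased that way, boundedness and concavity are both one-liners about a tail probability, so almost all of the work has already been done by the equivalence theorem.

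For boundedness, I would first note the trivial lower bound $\ssfp(x) \ge 0$, since by Theorem~\ref{thm-ws-fp-eq} we have $\ssfp(x) = \sum_{i=0}^{x-1} P(\rt>i)$ and each summand is a probability. For the upper bound $\ssfp(x) \le m$, I would give two interpretations: first, a direct argument that $\ssfp(x)$ is a limit of averages of working-set sizes $\omega(t,x) \le m$, so the bound is inherited; second, a purely algebraic argument from Eq.~\ref{eq:ssfp}, where the subtracted sum $\sum_{i=x+1}^{n-1}(i-x)P(\rt=i)$ is a sum of non-negative terms, so $\ssfp(x) \le m$. Either route yields $0 \le \ssfp(x) \le m$.

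For concavity, the key observation, already extracted inside the proof of Theorem~\ref{thm-ws-fp-eq}, is the forward-difference identity
\begin{align*}
\ssfp(x+1) - \ssfp(x) = P(\rt > x).
\end{align*}
Since the events $\{\rt > x\}$ are nested and shrink as $x$ increases, the function $x \mapsto P(\rt > x)$ is non-increasing in $x$. Therefore the first differences of $\ssfp$ are non-increasing, which is exactly discrete concavity: for all $x \ge 0$,
\begin{align*}
\ssfp(x+1) - \ssfp(x) \ \ge\ \ssfp(x+2) - \ssfp(x+1).
\end{align*}
If the statement is interpreted over real $x \ge 0$ via the natural piecewise-linear extension, concavity on the integer grid plus linear interpolation preserves concavity, so no additional argument is needed.

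The main obstacle, if any, is only a presentational one: deciding whether to state concavity in the discrete (non-increasing differences) sense or in the continuous (midpoint) sense, and making sure the upper bound is justified rigorously rather than hand-waved from ``working sets are at most $m$.'' Both potential pitfalls are cleanly sidestepped by invoking Theorem~\ref{thm-ws-fp-eq} and Eq.~\ref{eq:ssfp} explicitly, after which the proof is essentially immediate.
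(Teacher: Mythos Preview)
Your proposal is correct and follows essentially the same route as the paper: the upper bound $\ssfp(x)\le m$ is read off directly from Eq.~\ref{eq:ssfp}, and concavity is obtained from the Denning--Schwartz representation (Theorem~\ref{thm-ws-fp-eq}) via the forward-difference identity $\ssfp(x+1)-\ssfp(x)=P(\rt>x)$, which is non-increasing in $x$. Your write-up is slightly more detailed (explicit lower bound, the nested-events justification for monotonicity, and the remark on discrete versus piecewise-linear concavity), but no new idea is involved.
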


\begin{proof}
Eq.~\ref{eq:ssfp} shows $\ssfp(x) \le m$, so it is bounded.  Eq.~\ref{eq:denning-schwartz} shows $\ssfp(x) \le m$ increases by $P(\rt>x)$ at each $x$. Since its derivative is monotonically decreasing with $x$, $\ssfp(x)$ is concave.
\end{proof}

\noindent This concavity implies strict monotonicity until it reaches the maximum value, which is a common shape in the steady-state footprint of \emph{all} programs:

\begin{corollary}
\label{cor:fp-shape}
$\ssfp(x)$ starts from 0, is strictly monotone until it increases to $m$, and then stays constant.
\end{corollary}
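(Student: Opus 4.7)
The plan is to chain together the three ingredients already in hand: the closed form for $\ssfp(x)$ in Eq.~\ref{eq:ssfp}, the WS-FP Equivalence of Theorem~\ref{thm-ws-fp-eq} (whose proof also exhibits the one-step increment $\ssfp(x+1)-\ssfp(x)=P(\rt>x)$), and the boundedness/concavity of Theorem~\ref{thm:ssfp-concavity}. The three assertions of the corollary -- start value, strict monotonicity, eventual constancy -- each follow from one of these in a short step.

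First I would handle the start value. By Theorem~\ref{thm-ws-fp-eq}, $\ssfp(0)=s(0)$, and the Denning-Schwartz base case $s(0)=0$ gives the claim immediately. Next I would establish the critical equivalence
\begin{equation*}
\ssfp(x)=m \iff P(\rt>x)=0.
\end{equation*}
The ``$\Leftarrow$'' direction is immediate from Eq.~\ref{eq:ssfp}: if $P(\rt=i)=0$ for every $i>x$, the sum vanishes and $\ssfp(x)=m$. For ``$\Rightarrow$'', Eq.~\ref{eq:ssfp} forces
\begin{equation*}
\sum_{i=x+1}^{n-1}(i-x)\,P(\rt=i)=0;
\end{equation*}
since every coefficient $(i-x)$ with $i\ge x+1$ is strictly positive and every probability is nonnegative, each $P(\rt=i)$ with $i>x$ must be $0$, so $P(\rt>x)=0$.

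With this equivalence plus the increment formula $\ssfp(x+1)-\ssfp(x)=P(\rt>x)$ (already derived inside the proof of Theorem~\ref{thm-ws-fp-eq}), strict monotonicity is immediate: as long as $\ssfp(x)<m$, the equivalence gives $P(\rt>x)>0$, hence $\ssfp(x+1)>\ssfp(x)$. For the constant tail, note that $P(\rt>x)$ is monotonically non-increasing in $x$, so once it hits $0$ at some threshold $x_0$ -- necessarily finite since $\ssfp(x)\le m$ is bounded and the increments are nonnegative -- it remains $0$ for all $x\ge x_0$, which by the increment formula pins $\ssfp$ at $m$ from $x_0$ onward.

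I do not expect a real obstacle here; the only subtlety is bookkeeping about $\rt=\infty$ (first accesses), but Eq.~\ref{eq:ssfp} sums only over finite reuse times $i\le n-1$, so the equivalence above goes through cleanly without having to reason about the infinite-reuse-time mass separately.
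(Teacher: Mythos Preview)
Your argument follows the paper's approach: the paper simply records the corollary as a consequence of Theorem~\ref{thm:ssfp-concavity} without a written-out proof, and you supply the details via the increment identity $\ssfp(x+1)-\ssfp(x)=P(\rt>x)$ together with the equivalence $\ssfp(x)=m\iff P(\rt>x)=0$ read off from Eq.~\ref{eq:ssfp}. That decomposition is exactly what ``concavity implies strict monotonicity until the maximum'' unpacks to.

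There is one step that fails as written: the claim that the threshold $x_0$ is ``necessarily finite since $\ssfp(x)\le m$ is bounded and the increments are nonnegative.'' Boundedness together with nonnegative---even non-increasing---increments only forces the increments to tend to zero, not to vanish at a finite index; take increments $2^{-k}$ as a counterexample. What actually makes $x_0$ finite here is that the finite reuse times are bounded by $n-1$: for $x\ge n-1$ the sum in Eq.~\ref{eq:ssfp} is empty, so $\ssfp(x)=m$ outright (equivalently, $P(\rt>x)=0$ once $x$ exceeds the largest finite reuse time). Replace your parenthetical with that observation and the proof is complete.
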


The concavity has a critical importance later in the section on cache metrics.  It will ensure that miss ratios are monotone, i.e. no Belady anomaly~\cite{Belady66}, and a cache metric, cache fill time, exists and is unique.


\subsubsection{Observational Stochastics}

\citet{DenningB:CSUR78} formulated a new theory of operational analysis called \emph{observational stochastics}.  Conventional analysis was based on classic queuing models with idealistic assumptions such as infinite stationary processes.  Observational stochastics are based on directly measurable variables and directly verifiable assumptions.  The theory and applications in system and network analysis are enunciated in two recent books~\citep{Buzen:Book15,DenningM:Book15}.  All locality metrics and properties in this paper are based on direct measurements, do not depend on idealistic assumptions, hence are extensions of observational stochastics. 


The original timescale metric, Denning-Schwartz, was derived based on stochastic assumptions --- that a trace is infinite and generated by a stationary Markov process, i.e. a limit value exists~\citep{DenningS:CACM72}. In later work Denning and his colleagues adopted observational stochastics and used the formula on finite-length traces, with adjustments to account for boundary effects~\citep{SlutzT:CACM74,DenningS:CACM78}.  Although the formula was found accurate, this accuracy is not justified by the original derivation, because the stochastic assumptions cannot be proved for real programs. 

The properties of steady-state footprint, which is the same as Denning-Schwartz, give new theoretical explanations to this accuracy.  First, Theorem~\ref{thm-ws-fp-eq} shows that Denning-Schwartz, without any adjustment, accurately computes the steady-state footprint for any execution trace, whether finite or infinite.  Second, Eq.~\ref{eq:fp-inf} and Theorem~\ref{thm-ws-fp-eq} show that Denning-Schwartz is the footprint of infinite-long traces, even when the limit does not exists or is not unique.  Consider a sequence of accesses of two elements divided into pieces separated by commas: 1, 2, 22, 1212, 2222222, $\dots$, where half of the pieces alternate between 1 and 2, half of the pieces are all 2, and the length of every piece is the length of the entire trace before.  The footprint of this trace has two limit values, which Denning-Schwartz can compute even though this violates the stochastic assumption from which it was originally derived.



In addition, steady-state footprint expands the theoretical results of footprint in two ways.  First, Theorem~\ref{thm-ws-fp-eq} shows the precise relation between the two timescale metrics: Denning-Schwartz is an overestimate of footprint, and the difference is given by the two terms in the Xiang formula.  Second, the equivalence theorem leads to a different and simpler proof of concavity than \citet{Xiang+:ASPLOS13}

\subsubsection{From Footprint to Reuse Time}

\def \w {\mathcal{W}}

Denote total working-set size as $\w(x) =(n-x+1)\fp(x)$.  Using the Xiang formula, the first and second order finite differences of $\w(x)$ are:
{\setlength\arraycolsep{2pt}
\begin{eqnarray}
\nabla \w(x+1) &=& \w(x+1) - \w(x) = m+ \sum_{i=x+1}^n\textit{rt}(i)-\sum_{\fte<x+1}1 -\sum_{n-x<\lte}1 \nonumber\\
\nabla^2 \w(x+1)&=& \nabla \w(x+1) -\nabla \w(x)\nonumber\\
 &=&  -\rt(x)-\sum_{e}I(\fte=x)-\sum_{e}I(\lte=n-x+1) \nonumber
\end{eqnarray}}

Therefore, footprint can be used to derive the reuse time histogram if the first and last access times are known.

\subsection{Cache Metrics}

\def \res {\textit{res}}

The following metrics are average quantities of events in fully-associative LRU cache of size $c$:

\begin{itemize}
\item \emph{miss ratio} $\mr(c)$, which is the average rate of cache misses.  
\item \emph{inter-miss time} $\im(c) = \frac{1}{\mr(c)}$, which is the average time between two consecutive misses.
\item \emph{fill time} $\ft(c)$, which is the average time for the first $c$ misses to happen in an empty (cold-start) cache.
\item \emph{residence time} $\res(c)$, which is the average time a data item stays in the cache.
\end{itemize}

An analysis may consider all nonnegative integer cache sizes for two reasons.  In practice, cache is often shared, and the occupancy of a program in shared cache can be any size depending on its peers.  Fully analysis must measure the effect of locality at the granularity of a single cache block.  In modeling, the miss ratio curve for fully associative cache is equivalent to reuse distance~\cite{Mattson+:IBM70,Xiang+:ASPLOS13}, which can model the effect of cache associativity~\cite{Smith:ICSE76} and non-LRU replacement~\cite{SenW:SIGMETRICS13}.

\subsubsection{Footprint Differentiation}

Footprint differentiation computes the miss ratio as the derivative of the footprint (or the steady-state footprint).  

\begin{align}
\label{eq:fp-diff}
\mr(\fp(x)) = \fp'(x) = \fp(x+1) - \fp(x) \nonumber
\end{align}

\noindent The equation is the same as the time-window miss ratio formula of \citet{DenningS:CACM72} except by replacing the window length $x$ with its footprint $\fp(x)$.   

When steady-state footprint $\ssfp(x)$ is used, the derivative is monotone. The monotonicity is critically important for two reasons.  First, it is a prerequisite for correctness since real LRU cache has monotone miss ratios, i.e. no Belady anomaly.  Second, the miss ratio function $\mr(\ssfp(x))$ is discrete and not continuous.  It is not defined on all cache sizes.  In fact, $\ssfp(x)$ is usually not an integer, but an actual cache size must be.  The monotonicity bounds the miss ratios for missing cache sizes.  

The following theorem shows that footprint differentiation computes the miss ratio for all actual cache sizes $c$.

\begin{theorem}
\emph{(Footprint Differentiation)}
\label{thm:fp-mr}
\begin{align}
\ssfp'(x) \le \mr(c) < \ssfp'(x+1) \text{ if } x \le c < x+1
\end{align}
\end{theorem}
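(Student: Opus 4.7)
The plan is to use the footprint-differentiation identity $\mr(\ssfp(x)) = \ssfp'(x)$ stated in the equation just above the theorem, together with the concavity of $\ssfp$ from Theorem~\ref{thm:ssfp-concavity} and the implied monotonicity of $\mr$ in cache size, to sandwich $\mr(c)$ between $\ssfp'$ evaluated at the integer endpoints $x$ and $x+1$ of the bracket containing $c$.

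First, I would invoke the identity $\mr(\ssfp(x)) = \ssfp'(x) = \ssfp(x+1) - \ssfp(x)$, which by Theorem~\ref{thm-ws-fp-eq} equals $P(\rt > x)$. This pins the miss ratio down exactly at the ``canonical'' cache sizes $\ssfp(x)$ for integer $x$, and also expresses $\ssfp'(x)$ as an easily-computed tail probability of the reuse time distribution.

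Second, I would argue that $\mr$ is non-increasing in the cache size $c$ (no Belady anomaly for LRU), which follows from the concavity of $\ssfp$ established in Theorem~\ref{thm:ssfp-concavity}: enlarging the cache can only convert misses into hits. Together with the $\ssfp$-to-cache-size correspondence, this shows that for any cache size $c$ falling in the bracket determined by window lengths $x$ and $x+1$, the miss ratio $\mr(c)$ is squeezed between the two endpoint values, which by the first step are $\ssfp'(x)$ and $\ssfp'(x+1)$.

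Third, I would translate the hypothesis $x \le c < x+1$ into the bracket condition on $c$ by using the monotonicity of $\ssfp$ guaranteed by Corollary~\ref{cor:fp-shape}, reading off the stated sandwich as an immediate consequence of step two. The strict upper bound $\mr(c) < \ssfp'(x+1)$ would come from the strict inequality $c < x+1$, combined with strict monotonicity of $\ssfp$ in its active (not yet saturated) region.

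The main obstacle is the saturated tail where $\ssfp(x) = m$ and $\ssfp'(x) = 0$: here both bounds collapse to zero and the sandwich becomes degenerate, so this boundary case would need a separate verification using the fact that $\mr(c) = 0$ whenever $c \ge m$. A secondary subtlety is aligning the real-valued cache size $c$ with the integer grid on which $\ssfp'$ is defined as a forward difference; I would handle this by treating $\ssfp'$ as piecewise constant on unit intervals of the window length and appealing to the LRU stack inclusion property to propagate the inequality from integer cache sizes to the interior of the bracket.
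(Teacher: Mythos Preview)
Your proposal is correct and takes essentially the same approach as the paper. The paper's entire proof is the single sentence ``The proof follows directly from the monotonicity of $\mr(c)$, which follows directly from Theorem~\ref{thm:ssfp-concavity}''; your three steps simply unpack this monotonicity-plus-sandwich argument (pin down the endpoint values via the differentiation identity, invoke monotonicity of $\mr$ from concavity, then squeeze) in more detail than the paper itself provides.
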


\noindent The proof follows directly from the monotonicity of $\mr(c)$, which follows directly from Theorem~\ref{thm:ssfp-concavity}.  The steady-state footprint is often a fractional number.  Theorem~\ref{thm:fp-mr} shows that its derivatives at $x$ and $x+1$ are ``poles" that mark the bounds of the miss ratio of cache sizes $c$ between $x$ and $x+1$, integer or not.  In practice, the miss ratio is selected by the $x$ whose footprint is closest to $c$.

As stated in Corollary~\ref{cor:fp-shape}, the steady-state footprint of all programs have a common shape, which starts from 0 and increases continuously with diminishing increments until it reaches $m$.  Its derivative, the miss ratio, starts from 100\% when $c=0$ and decreases monotonically until it drops to 0\% when $c=m$.  The lower and upper bounds are guaranteed and ensure any miss ratio it computes is valid. 

\medskip

\def \ws {\mathcal{W}}

To understand footprint differentiation, consider a memory access and the factors that cause it to be a cache hit or miss.  Instead of reuse distance in access metrics, consider the execution window $w$ preceding the access, such that its working set size equals to the cache size, i.e. $|\ws|=c$.  At the access, the cache is full and filled with (only) the data of $\ws$.  The access is a cache miss if and only if the accessed data is outside $\ws$.  The number of misses is the number of $w$ windows followed by such an access.  At a miss, $\ws$ grows after the access.  The miss ratio is in fact the average growth of the working-set size.

Footprint is the average working-set size.  Its derivative is the growth of the average working-set size.  The essence of footprint differentiation is to \emph{equate the average growth of the working-set size with the growth of the average working-set size}.  In other words, the miss ratio equals to the growth of footprint.

Because of the equality, we can use the miss ratio to construct the footprint.  The following shows that after one access, the footprint increases by 1 if the access is a miss and 0 otherwise. 


\begin{align*}
\fp(w+1) = \mr(c) (\fp(w) +1) + (1- \mr(c)) \fp(w)
\end{align*}

\noindent The equation is mathematically identical to footprint differentiation.

As an example, consider the access trace \texttt{abc abc abc}.  Table~\ref{tbl:fp-diff-example} shows the footprint in the second row and the miss ratio in the third row, computed as the difference between consecutive values in the second row.

\begin{table}[h]
\tbl{The steady-state footprint of \texttt{abc abc abc} and the miss ratio computed using footprint differentiation.
\label{tbl:fp-diff-example}}{  
\centering
\begin{tabular}{|c||c|c|c|c|c|}
  \hline
  $x$ & 0 & 1 & 2 & 3 & $\ge 4$ \\ \hline
  $c = \ssfp(x)$ & 0 & 1 & 2 & 3 & $\ge 4$ \\ \hline 
  $\mr(c)$ & 100\% & 100\% & 100\% & 0\% & 0\% \\ \hline
\end{tabular}}
\end{table}

\subsubsection{Cache Fill Time}
\label{sec:ft}

In the higher-order theory of locality (HOTL), \citet{Xiang+:ASPLOS13} defined \emph{cache fill time}, which we denote as $\ft(c)$, as the average amount of time for a program to access an amount of data equal to a cache size $c$.  Xiang et al. studied two definitions and chose to define it as the inverse function of footprint $\fp(x)$.  The inverse is unique because of concavity (excluding the fill time at or greater than $m$).


\citet{Hu+:USENIX16} defined the \emph{average eviction time} (AET) as the average time between the last access of a data block in cache and its eviction from the cache.  Trivially, fill time $\ft(c)$ is the average eviction time of fully-associative LRU cache of size $c$.  Indeed, Hu et al. showed that ignoring boundary effects, Denning-Schwartz (Section~\ref{sec:ds}) computes the average eviction time (AET).  

Figure~\ref{fig:fpft} shows that fill time and footprint are the two-way mapping between the dimensions of space and time, i.e. between cache size and window length.  Footprint differentiation computes the miss ratio using the space dimension, i.e. $\mr(c)$ is the fractional value of $\fp(x+1)-\fp(x)$. \citet{Xiang+:ASPLOS13} gives another method, \emph{reuse-time conversion}, which computes the miss ratio using the time dimension.

\begin{figure}[t]
\centering
\includegraphics[width=10cm]{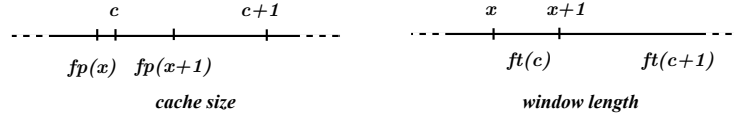}
\caption{Footprint $\fp(x)$ and fill time $\ft(c)$ are inverse functions: $x = \ft(\fp(x))$ and $c = \fp(\ft(c))$ for all $x,c \ge 0$.}
\label{fig:fpft}
\end{figure}

Given cache size $c$ and its fill time $\ft(c)$, an approximation can be made such that an access is a miss if and only if its reuse time is greater than $\ft(c)$.  The miss ratio is:

\begin{align}
\label{eq:rtc}
\mr(c) = P( rt > \ft(c) )
\end{align}

In theory, \citet{Xiang+:ASPLOS13} showed that reuse-time conversion computes the same result as footprint differentiation (Eq.~\ref{eq:fp-diff}), when the boundary effect is negligible, i.e. for the steady-state footprint (Eq.~\ref{eq:ssfp}).

In practice, reuse-time conversion has two benefits.  First, it counts the cold-start misses correctly.  These are first accesses whose reuse time is infinite, since $\rt > \ft(c)$ for all $c$.  Second, in short traces, e.g. sampled segments, the boundary effect is significant.  When it is included, differentiation of the footprint is not guaranteed monotone.  Reuse-time conversion, however, guarantees monotone miss ratios.  

Sampling makes online analysis possible.  For SPEC CPU benchmarks, footprint sampling has been shown to reduce overhead to less than 1\% visible~\cite{Xiang+:ASPLOS13} and less than 0.09 seconds per program on average~\cite{Wang+:CCGrid15} and improve accuracy for programs with phase behavior~\cite{Wang+:CCGrid15}.

A third, theoretical benefit of reuse-time conversion is the justification of an early method by Easton and Fagin.


\subsubsection{Easton-Fagin Recipe}
\label{sec:recipe}
\label{sec:snir}

\citet{EastonF:CACM78} were among the first to study cache sharing, in particular, the effect of context switching in cache. They defined a ``cold-start cache" as one when a program is switched back and its earlier data have been wiped out, and to distinguish from it, a ``warm-start cache" for a regular, solo-use cache.

The 1978 paper gave an ingenious solution to a practical problem: to compute
the cold-start miss ratio, which is difficult to simulate, from the miss ratio of warm-start cache, for which simulation is easy.  It was ground breaking and pioneered the approach to compute the shared cache performance by reusing the existing solutions already developed for non-shared cache.

Easton and Fagin ``gave a rough explanation as to why our recipe is reasonable" but ``remarks without proof that this need not be the case, even in the LRU stack model.''  However, they found that the ``estimate was almost always
within 10-15 percent of the directly observed average cold-start miss
ratio.'' Next we derive the recipe from the measurement theory.

Cache fill time $\ft(c)$ is the time it takes a program to have the first $c$ misses in cold-start cache of an infinite size.\footnote{The lifetime of first $c$ misses in cold-start cache of size $c$, $\textit{LIFE*}_c(c)$, in \citeauthor{EastonF:CACM78} is fill time $\ft(c)$ in this paper, and the lifetime (of 1 miss) in warm-start cache of size $c$, $\textit{LIFE}_c(1)$, is $\im(c)$.}  The Easton-Fagin recipe can be written as follows:

\begin{align}\label{eq-ef}
\ft(c) \approx \sum_{i=0}^{c-1} \im(i) 
\end{align}

The recipe states that in a cache of size $c$, the fill time is the sum of the inter-miss time in the cache of \emph{all} smaller sizes.  The following formula explains the recipe using the measurement theory

\begin{align*}
\ft(c) = \sum_{i=0}^{c-1} (\ft(i+1) - \ft(i)) \approx \sum_{i=0}^{c-1} \im(i)
\end{align*}

The preceding formula first rewrites $\ft(c)$ into a series of sums and then replaces $\mr(i)$ with the derivative of footprints at the time window of length $\ft(i)$.  The approximation by Easton and Fagin is reduced to the following simpler form: 

\begin{align*}
\ft(i+1) - \ft(i) \approx \im(i) = \frac{1}{\mr(i)} = \frac{\fp(\ft(i+1)) - \fp(\ft(i))}{\fp(\ft(i)+1) - \fp(\ft(i))}
\end{align*}

The first two terms show that the derivative of fill time at cache size $i$ is approximated as the inter-miss time at $i$.  This approximation is explained in the last term, which is a ratio.  The only difference is $\fp(\ft(i+1))$ in the enumerator and $\fp(\ft(i)+1)$ in the denominator.  Both are increases in footprint from the same starting point when the footprint is the cache size  $\fp(\ft(i))=i$.  The denominator is the increase of footprint at $i$ by a unit time, and the enumerator is the increase to fill the cache size $\fp(\ft(i+1))=i+1$.  This is linear approximation at each $i$ --- the rate of footprint increase is constant between $i$ and $i+1$.

Therefore, this section has shown that \emph{the Easton-Fagin recipe is a piecewise linear approximation of the footprint}.  This also explains the flexibility of Easton-Fagin.  The granularity of piecewise approximation can be fine, e.g. consecutive cache sizes, or coarse, e.g. power-of-two sizes.






\subsubsection{Residence Time}

We define the \emph{residence time}, $\textit{res}(c)$, as the average
time a data block stays in cache, from the time of loading into the
cache to the time of eviction. 
The residence time can be computed as follows.  Assuming a fully occupied cache of size $c$ for a time period $T$, the sum of residence time of all data blocks is $Tc$, and the number of data blocks loaded in the cache is $T\textit{mr}(c)$.  The average residence time is $\textit{res}(c) = \frac{T c} {T \textit{mr}(c) } = c / \textit{mr}(c)$.  The same formula can be derived from the Little's law $L=\lambda W$, taking the residence time as the service time $W$, the miss ratio as the arrival rate $\lambda$, and the cache size as the average number of customers in a stable system $L$~\cite[pp. 182]{DenningM:Book15}.

\subsection{Linear-time MRC Modeling}
\label{sec:uni}

A weakness of the Xiang formula (Equation \ref{eq:xiang}) is that the entire reuse-time histogram is required when computing the footprint of any timescale $x$.  The total time to compute the complete footprint $\fp(x)$ for all timescales is quadratic.\footnote{The Denning-Schwartz formula can compute the steady-state footprint in linear time but not the footprint.}  This section derives an additive formula of $\fp(x)$ and then an incremental version of the additive formula that computes the complete footprint $\fp(x)$ in linear time.

\subsubsection{The Additive Formula}
\label{sec:add}


To derive an additive formula, we calculate the footprint 
based on the following observation:
if an element $e$ appears more than once in a window,
we count only its first appearance in its working-set size.

For a datum $e$, let $\textit{ai}(\fte)$ be the initial access in the trace. There are three cases:

\begin{enumerate}
\item If $\fte<w$, $\textit{ai}(\fte)$ is the first appearance of $e$ in the first $\fte$ windows of length $w$.
\item If $w\le \fte\le n-w+1$, $\textit{ai}(\fte)$  is the first appearance in $w$ windows of length $w$.
\item If $ n-w+1 < \fte$, $\textit{ai}(\fte)$ appears first in the last $n-\fte+1$ windows of length $w$.
\end{enumerate}

\noindent Adding this count for all $e$, we have the total footprint contribution from initial accesses, which we denote as $\fp_1(w)$:

\[fp_1(w) = \sum_{\fte\le n-w+1} \min(\fte,w) + \sum_{\fte> n-w+1} (n- \fte+1)\]


Next we consider $j$th access ($2\le j\le k$) of $e$, i.e., $\textit{ai}(t_e^j)$.  There are four cases.

\begin{enumerate}
\item If $t_e^j\le w$, it is the first appearance in $rt_e^j$ ($rt_e^j<w$) windows of length $w$. 
\item If $w\le t_e^j\le n-w+1$, it is the first appearance in $\min(rt_e^j,w)$ windows of length $w$.
\item If $t_e^{j-1}< n-w+1 < t_e^j$, it is the first appearance in $\min(rt_e^j,w)-(t_e^j-(n-w+1))$ windows.
\item If $ n-w+1 \le t_e^{j-1} < t_e^j$, it is not the first appearance in any window of length $w$,
because $\textit{ai}(t_e^{j-1})$ and $\textit{ai}(t_e^{j})$ always appear together in such windows (the last $n- t_e^j+1$ windows of length $w$). We express this zero as $rt_e^j-rt_e^j$.
\end{enumerate}

We merge all the above terms except the subtractive term in the third and fourth cases.  We call the sum $\fp_2(w)$, which can be written concisely as

\[\fp_2(w) = \sum_{i=1}^{n}\min(i,w) \times\textit{rt}(i) \]

We calculate the negative term in the third and fourth case as $\fp_3(w)$.  Note that the
problem is now isolated since the access $t_e^j$ lies within the last
length-$w$ window, $t_e^j > n-w+1$.  A
straightforward solution is to profile the reuse time 
as before but only for the last $w-1$ accesses,
$\textit{ai}(n-w+2\dots n)$.
We denote the first and last accesses in the sub-trace by $\ftea$ and $\ltea$.
For an element $e$,
$\ftea$ equals to $t_e^j-(n-w+1)$ in the third case.
If $\fte>n-w+1$, we have $\ftea = \fte-(n-w+1)$ and
add $\fte-(n-w+1)$ to
$\fp_3(w)$. 
Using the reuse time histogram of the sub-trace,
$\fp_3(w)$
can be calculate as:
\[\fp_3(w) =  \sum_{i=1}^{w} i\times \rt'(i) + \sum_{e=1}^{m} \ftea-\sum_{n-w+1<\fte}(\fte-(n-w+1))\]

We actually do not need to profile again.
As a property of the reuse time, we have $\sum\limits_{i=1}^{n-1}i\times
\textit{rt}(i) = \sum\limits_{e=1}^m (\lte-\fte)$. Then
$\fp_3(w)$ equals to:
{\setlength\arraycolsep{2pt}
\begin{eqnarray}
\fp_3(w) &=&  \sum_{e=1}^{m} \ltea -\sum_{n-w+1<\fte}(\fte-(n-w+1)) \nonumber\\
&=&\sum_{n-w+1<\lte} \ltea-\sum_{n-w+1<\fte}(\fte-(n-w+1))\nonumber\\
& = &  \sum_{n-w+1<\lte} \left(\lte- (n-w+1)\right) -\sum_{n-w+1<\fte}(\fte-(n-w+1)) \nonumber
\end{eqnarray}}

Putting it all together, the final formula is:
{\setlength\arraycolsep{2pt}
\begin{eqnarray}\label{eq-singleside}
(n-w+1) \fp(w) =  \fp_1(w) + \fp_2(w) - \fp_3(w) =  wm + \sum_{i=1}^{n}\min(i,w)\times \textit{rt}(i) \nonumber\\
-   \sum_{e=1}^m d (w-\fte) - \sum_{e=1}^m d\left(\lte- (n-w+1)\right) 
\end{eqnarray}}

There is another explanation for the first and the last terms
in Equation~\ref{eq-singleside}. $wm$ means
that the first access of every element contributes $w$ to the footprint,
but for $\fte<w$, it only appears in $m-(w-\fte)$ windows.

The additive formula shows the WS-FP equivalence directly. When $n$ is infinite in the additive formula, all the terms except for $\fp_2$ can be omitted and the additive formula is equivalent to Denning-Schwartz.


\subsubsection{The Incremental Formula}

We say that a footprint calculation $\textit{fp}(w)$ is incremental
if it uses just the part of the reuse histogram $\textit{rt}(i)$ for $i
\le w$.  The Xiang formula and the additive method are not
incremental because they require the full reuse-time histogram
to compute any non-trivial footprint.

To obtain an incremental solution, we start from the initial estimate
that every window of size $w$ contains $w$ distinct elements.  The
maximal sum of working-set sizes is then $(n-w+1)w$.  In the following derivation, we
divide an access sequence into three parts: the head
$\textit{ai}(1\dots w-1)$, the body $\textit{ai}(w\dots n-w+1)$, and
the tail $\textit{ai}(n-w+2\dots n)$.

%

We now  
decreased the initial estimate $(n-w+1)w$
by removing the duplicates in all windows in each part.
Let's first consider the body
and assume that 
$t_e^{j_1-1} < w \le t_e^{j_1}<\dots<t_e^{j_2-1}\le n-w+1<t_e^{j_2}$
for element $e$. 
Consider a case where $j$ satisfies $j_1\le j-1<j\le j_2-1$.
The two accesses $\textit{ai}(t_e^{j-1})$ and $\textit{ai}(t_e^{j})$ 
appear in $d(w-rt_e^{j})$ windows.
All accesses of $e$ in the body decrease the initial estimate
by $d(w-rt_e^{j_1+1})+\dots+d(w-rt_e^{j_2-1})$.

In the head sequence, $\textit{ai}(1\dots w-1)$,
if $t_e^{j-1} < w \le t_e^{j}$,
the first $j-2$ accesses duplicate
$t_e^1+t_e^2+\dots+t_e^{j-2}$ times in the first $w$ windows,
and the $(j-1)$th access duplicates $d(w-rt_e^{j})$ times.
Similarly in the tail $\textit{ai}(n-w+2\dots n)$,
if $t_e^{j-1} \le n-w+1 < t_e^{j}$,
accesses of $e$ decrease the initial estimate by 
$d(w-rt_e^{j})+(w-t_e^{j+1})+\dots+(w-t_e^k)$.

The processing is shown in Algorithm~\ref{alg-performance-rt2} for the
head and the tail of a memory trace.  The algorithm adds $t_e^i$ or
$(w-t_e^j)$ and subtracts $rt$ in each part.  It requires specialized
information collection and has no succinct (mathematical)
representation except for the algorithm.  Given their results as
$\textit{lhead}, \textit{ltail}$, the complete formula is:
\[(n-w+1)\textit{fp}(w) = (n-w+1)w  - \sum_{i=1}^{w-1}(w-i)\textit{rt}(i)+ lhead + ltail\]

\begin{algorithm}
\caption{The Head/Tail Processing of the Incremental Method }
\label{alg-performance-rt2}

 $\lt(m)\leftarrow \{0\}$\;
\For{$i=1\TO w-1$}{
\If{$\lt(\textit{ai}(i))\neq 0$}{
 $lhead \leftarrow lhead - \lt(\textit{ai}(i)) + (i-\lt(\textit{ai}(i)))$\;
}
 $\lt(\textit{ai}(i))\leftarrow i$\;
}

 $\lt(m)\leftarrow \{0\}$\;
\For{$i=n-w+2 \TO n$}{
\If{$\lt(\textit{ai}(i)) \neq 0$}{
 $ltail \leftarrow ltail - (w-i) + (i-\lt(\textit{ai}(i)))$\;
}
 $\lt(\textit{ai}(i))\leftarrow i$\;
}

\end{algorithm}

To obtain a mathematical description, 
we use the sub-formula from the additive method to calculate the
footprint for windows in the head and tail parts directly.

First, we re-calculate the footprint just for the body part.
For each access in $\textit{ai}(w\dots n-w+1)$,
we start by assuming that every access contributes $w$ to the footprint
and obtain the initial estimate $(n-2w+2)w$.
Assume $e$'s reuse time sequence is
$t_e^{j_1-1} < w \le t_e^{j_1}<\dots<t_e^{j_2-1}\le n-w+1<t_e^{j_2}$. 
Based on the previous explanation,
the estimate is decreased by $d(w-rt_e^{j_1+1})+\dots+d(w-rt_e^{j_2-1})$.

For each access in the head $\textit{ai}(1\dots w-1)$ and tail 
$\textit{ai}(n-w+2\dots n)$,
we use the RT sequence algorithm 
to traverse either of them and denote the results
as $\fthead$,$\rthead$,$\lthead$ and $\fttail$,$\rttail$,$\lttail$ respectively.
If an element $e$ is accessed in $\textit{ai}(1\dots w-1)$,
we have $\ltehead = t_e^{j_1-1}$.
So $\textit{ai}((\ltehead$ contributes $\ltehead - d(w-rt_e^{j_1})$
to the footprint.
Similarly, if $e$ is accessed in $\textit{ai}(n-w+2\dots n)$, and
$\textit{ai}(\ftetail)$ contributes $(w-\ftetail)-d(w-rt_e^{j_2})$.
Let the contributions of the head and the tail
be $lhead'$ and $ltail'$.  They can be calculated as:
{\setlength\arraycolsep{2pt}
\begin{eqnarray}
lhead' &=& \sum_{e=1}^m\ltehead +  \sum_{i=1}^{w-1} (w-i)\rthead(i) = \sum_{e=1}^m\ftehead +  w\sum_{i=1}^{w-1} \rthead(i) \nonumber\\
ltail' &=& \sum_{e=1}^m(w-\ftetail)+  \sum_{i=1}^{w-1} (w-i)\rttail(i) =  w(w-1)-\sum_{e=1}^m\ltetail\nonumber
\end{eqnarray}}

Putting it all together, the incremental method is:
{\setlength\arraycolsep{2pt}
\begin{eqnarray}\label{eq-doubleside2}
(n-w+1)\textit{fp}(w) &=& (n-2w+2)w  - \sum_{i=1}^{w-1}(w-i)\textit{rt}(i)+ lhead' + ltail' \nonumber\\
&=& (n+1)m +(n-2m)w - \sum_{i=1}^{w-1}(w-i)\textit{rt}(i)  \nonumber\\
&&+ \sum_{e=1}^{m}\min(\fte,w) -\sum_{e=1}^{m}\max(\lte,n-w+1)
\end{eqnarray}}

The incremental method computes the footprint in $\bigo(w)$ time and $\bigo(w)$ space.
When $w\ll m$,
it has a significant advantage in efficiency over all previous solutions.

\subsection{Frequency Locality}
\label{sec:freq}

Frequency is concise ---
for any $n$ accesses to $m$ data, the average access frequency per
datum is $n/m$, a single number.  

It is commonly known as ``hotness''~\cite{Chilimbi+:PLDI99s,Rubin+:POPL02}.  Program data with a greater
number of reuses are hotter.  The locality is better if the
``temperature'' is higher.  
However, the ratio completely ignores the
order of data access.  The following three traces have the same access
frequency but different locality.  We name the first two following~\cite{DenningK:SOSP75} and the last one following~\cite{DingK:JPDC04}.
{\setlength\arraycolsep{0.3pt}
\begin{eqnarray}
\textit{cyclic} &:& e_1,e_2,\dots,e_m,e_1,e_2,\dots,e_m\nonumber\\
\textit{sawtooth} &:& e_1,e_2,\dots,e_m,e_m,\dots,e_2,e_1\nonumber\\
\textit{fused} &:& e_1,e_1,e_2,e_2,\dots,\dots,e_m,e_m\nonumber
\end{eqnarray}}
The locality depends on not just the frequency but also the recency of
reuse.  Although the three traces reuse the same data, the locality of
\emph{fused} is better than \emph{sawtooth}, and \emph{sawtooth}
better than \emph{cyclic}.  The closer the reuse is, the better the
locality.  

In theory, Snir and Yu showed that
the complete locality cannot be captured by a fixed size representation~\cite{SnirY:locality05}.
One way to measure locality is $\mr(c)$ for all $c \ge 0$.  The Snir-Yu limit implies that the frequency conversion has lost too much information --- it is impossible to compute the miss ratio from a fixed number of access frequencies.  

Not all locality definitions are equally usable. For the example, the \emph{fused} sequence has optimal locality, because no other access order can further reduce any reuse distance.  This optimality is obvious when analyzed using reuse distance but not using footprint or miss ratio.  Since different locality definitions are related, we can now take the optimal locality in one metric, e.g. reuse distance, and derive the optimal values of other metrics, e.g. footprint and miss ratio.  The next section presents the complete conversion theory.

\subsection{The Complete Theory}


Figure~\ref{fig:ctl} shows the \emph{MTL graph}, where each node is a metric of locality, each directed edge a conversion and, if the edge has a cross ($\times$), the assertion that no such conversion exists.  An undirected edge means two directed edge in opposite directions. The MTL conversions are injective.  A series of directed edges form a path.  The transitive relation gives the conversion or its impossibility between every pair of metrics.

	\begin{figure*}[h]
	\begin{center}
	   \includegraphics[width=0.99\textwidth]{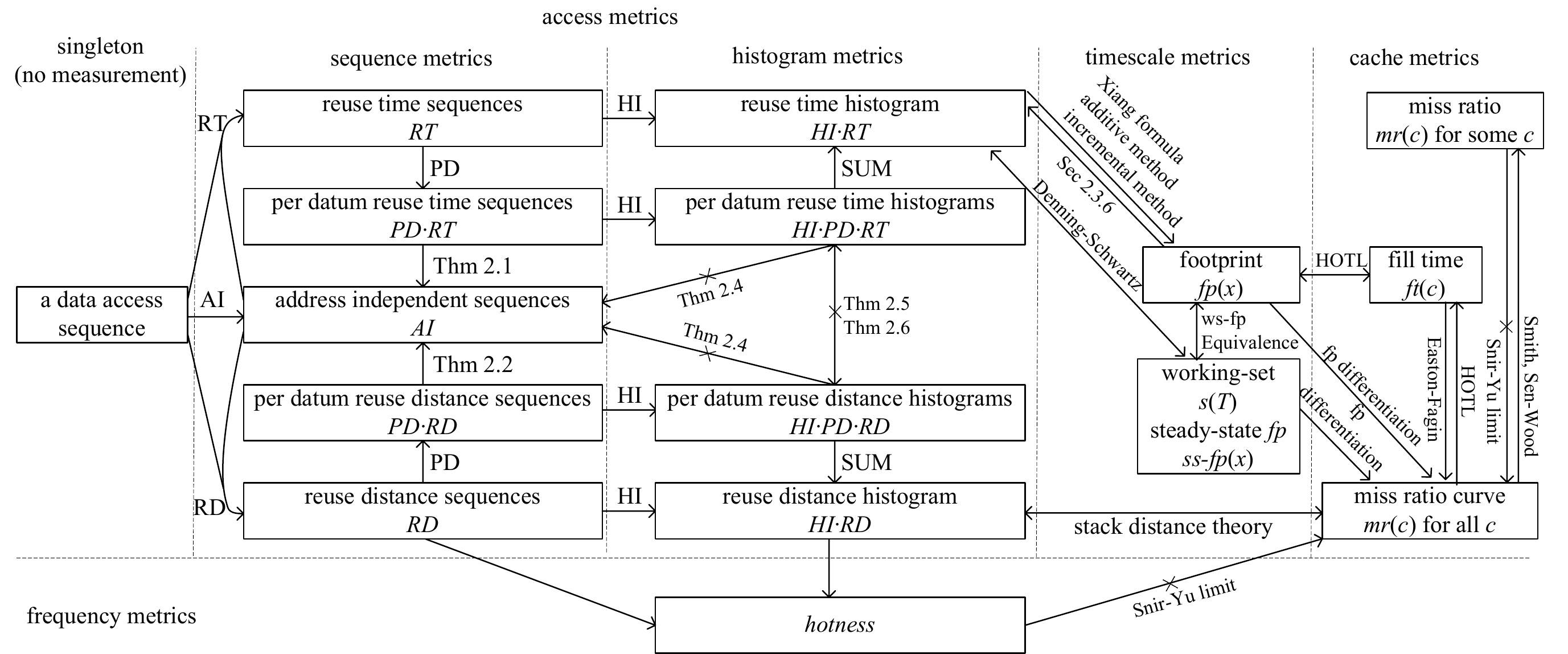}
	\end{center}
	   \caption{The conversion between the metrics of MTL}
	   \label{fig:ctl}
	\end{figure*}
    


The locality metrics are grouped by categories (Figure~\ref{fig:ct-tree}) into six areas in the MTL graph separated by dotted lines.  Timescale locality is centrally connected: it is the hub that connects histogram and cache metrics and through them, all metrics.  

All the metrics in the MTL graph are from existing work.  The contribution of the preceding sections is the connection of these metrics in particular the conversion and non-equivalence results that required for all-to-all relations and were absent from past work. 



\subsection{Usefulness in Practice}

The measurement theory helps to solve problems in practice. The first is precision. All metrics in the MTL graph are defined by mathematics or algorithms.  The second is concision and completeness in their relations.  A metric may be computed in different ways, and this is shown by multiple paths from the root.  Every derivation between two locality concepts is represented by a path in the graph.  Each conversion (edge) in the MTL graph is indivisible within itself, i.e. atomic. The third is modularity.  A path decomposes a complex construction into individual steps.  When there are multiple paths to derive the same metric, their overlap shows shared intermediate concepts and steps. These combinatorial choices are fully expressed but without being enumerated.


Mathematics is not just precise but maintains the precision after many steps of derivation.  Furthermore, it proves results for \emph{all} programs, which are therefore universal.  For example, for all programs, it takes linear time to computed the miss ratio of \emph{all} cache sizes,
and the computed miss ratio is bounded and monotone (Theorem~\ref{thm:fp-mr}).

Researchers can use multiple metrics when solving a problem. As the example in Section~\ref{sec:freq} shows, it is often convenient to formulate a problem using one metric and solution in another.  The measurement theory gives researchers full freedom in using these concepts in practice.  Its equivalence and conversion results provide safe bridges, and non-equivalence results mark the boundary and limitations.

\section{Related Work}
\label{sec:rel}

We review more related work in more detail in the following three areas: 

\paragraph{Timescale Locality}
In 1972, \citeauthor{DenningS:CACM72} gave a linear-time, iterative formula to compute the average working-set size from reuse times (inter-reference intervals). The derivation was based on stochastic assumptions --- that a trace is infinite and generated by a stationary Markov process, i.e. a limit value exists.  Later work extended the formula and used it on finite-length traces but did not extend the original theory~\citep{SlutzT:CACM74,DenningS:CACM78}.  The equivalence theorem (Theorem~\ref{thm-ws-fp-eq}) provides a theoretical justification why the Denning-Schwartz formula is accurate without stochastic assumptions. 

Shen was the main inventor of a formula that converts from reuse time to reuse distance statistically~\citeyear{Shen+:POPL07}.  Given the reuse time histogram, the Shen formula predicts the most likely reuse distance histogram.  The conversion was 99\% accurate and used by the open-source programming tool SLO~\cite{BeylsD:HPCC06}.  The conversion has many steps.  It was difficult to understand the reason for its accuracy.  The authors actually admitted in the paper that their ``formula is hard to interpret intuitively.''

\citet{Shen+:POPL07} defined $p(w)$ as ``the probability of any given data element to appear in a time interval of length'' $w$ and computed as follows from the reuse time histogram:


\[p(w)=\sum_{i=1}^{w}\sum_{j=i+1}^{n-1}\frac{\textit{rt}(j)}{m-1}\]

\noindent If we take the difference $p(w+1)-p(w)$, we see that it is equivalent to the Denning-Schwartz formula divided by $m-1$:

\[p(w+1)-p(w)=\sum_{i=w+2}^{n}\frac{\textit{rt}(i)}{m-1}\]

\noindent From footprint, the probability is $p(w)=\fp(w)/(m-1)$.  From its concavity (Corollary~\ref{cor:fp-shape}), we can easily prove that $\fp(w)/m$ is bounded between 0 and 1, as it should being a proper value of probability. The reason for $m-1$ is to model the reuse distance.  The reused datum cannot be accessed inside a reuse window.  The Shen formula can predict the reuse distance histogram accurately for many applications, which shows the accuracy of timescale locality. 

In 2010, \citeauthor{EklovH:ISPASS10} developed Statcache and showed that it was highly accurate (98\%) for computer-architecture evaluation. Statcache estimates the average reuse distance $ES(r)$ of all the accesses with the same reuse time $r$.
\citet{EklovH:ISPASS10} defined $F_j$ as the fraction of all memory references with a reuse time greater than $j$ and computed the average reuse distance $ES(r)$ using the following formula: 

\[ES(r) =\sum\limits_{j=1}^rF_j= \sum\limits_{j=1}^r\sum\limits_{i=j+1}^{n}\textit{rt}(i)\]

\noindent The purpose and the method of Statcache are similar to Shen.  While the basic formula is identical to Denning-Schwartz, Statcache also developed extremely fast measurement through a novel type of random sampling~\cite{EklovH:ISPASS10}. The subsequent application of Statstack won a best paper award a year later for its efficiency and accuracy~\cite{Eklov+:HiPEAC11}, which are the benefits of using timescale locality to model CPU caches.


The correctness of footprint differentiation was initially validated on the CPU
cache~\cite{Xiang+:PACT11,Xiang+:ASPLOS13}.  Three studies since 2014
further evaluated it for fully-associative LRU cache: memory access
in data cache~\cite{Wang+:CCGrid15}, object access in key-value cache,
i.e.  Memcached~\cite{Hu+:USENIX15}, and disk access in server
cache~\cite{Wires+:OSDI14}.  The three studies re-implemented the
footprint analysis independently and reported high accuracy through
extensive testing.  Hu et al. showed superior speed of convergence
using the theory~\cite{Hu+:USENIX15}.  Wang et al. showed strong
correlation (coefficient 0.938) between the predicted miss ratio and
measured co-run speed~\cite{Wang+:CCGrid15}.  

\paragraph{Shared Cache Modeling}
Published in 1978, the work of \citeauthor{EastonF:CACM78} was among the first to model the effect of cache sharing due to time sharing.  The paper explains the recipe intuitively but ``without proof''. Section~\ref{sec:recipe} derives the recipe mathematically from the measurement theory.  A technique to model shared cache is concurrent reuse distance, which shows the locality of a parallel execution precisely but does not have the property of composition as timescale metrics do~\cite{Schuff+:PACT10,WuY:TCS13}.  Many other techniques are hybrids where the locality is by reuse distance and the interference is by footprint~\cite{Xiang+:PPOPP11,ChenA:HPCA09,,Suh+:ICS01}, including one of the first models of multicore cache~\cite{Chandra+:HPCA05}.  The relations among these three types of models, footprint, reuse distance, and hybrid,  are explained by the measurement theory.





\paragraph{Cache Benchmark Synthesis}
Benchmark synthesis is the construction of a synthetic program with desirable locality.  It is locality metric conversion in the opposite direction to a trace. Synthesis has been used to solve two practical problems.  The first is memory probing with parameterized locality to examine machine performance in multiple use scenarios.  APEX-MAP is such a probe program that can be configured so its execution exhibits a distribution of reuse distances similar to a given target~\citep{IbrahimS:ICPP10}.  While APEX-MAP approximates, an algorithm by \citet{ShenS:LCPC08} generates a trace that has the exact reuse distance histogram as specified. The second use of synthesis is black-box benchmark cloning, for which cache behavior cloning is a sub-problem. A system called WEST generates a stochastic trace based on the reuse distance distribution within each cache set, in order to accurately replicate the behavior of set-associative caches~\cite{BalakrishnanS:HPCA12}.  




\section{Summary}

This paper has formally defined major metrics of locality, grouped them into six categories, and showed a series of relations and properties, including the equivalence between sequence metrics, non-equivalence between histogram metrics, the equivalence between two timescale metrics, a formal justification of the Easton-Fagin recipe, the first solution that computes all miss ratios from the footprint in linear time, and from these results, a complete measurement theory of locality. 

\appendix
\section*{APPENDIX}
\setcounter{section}{1}

{Non-equivalence of Histogram Metrics} In the following three theorems, counter examples are used to disprove equivalence.  

	\begin{theorem} \label{thm-rdrt2trace}
The memory trace cannot be built from its RT histogram or RD histogram,
or the histograms of individual elements.
	\end{theorem}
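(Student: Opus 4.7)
The statement is a non-equivalence claim, so the plan is to exhibit a single pair of distinct AI traces whose RT histogram, RD histogram, and corresponding per-element histograms all coincide. By Theorem~\ref{thm-rds2trace}, the per-element reuse-distance \emph{sequence} already determines the trace; thus histograms differ from sequences only in forgetting the order of reuses within each element. A natural strategy is therefore to pick two elements whose reuse sequences become permutations of each other when we move between the two traces, while an extra singleton element ``slides'' through the sequence to shuffle global positions without disturbing per-element multisets.

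Concretely, I would take the length-$7$ AI traces
\[ A = e_1\, e_2\, e_1\, e_2\, e_3\, e_1\, e_2, \qquad B = e_1\, e_2\, e_3\, e_1\, e_2\, e_1\, e_2. \]
In $A$, the occurrences of $e_1$ are at positions $1,3,6$ and of $e_2$ at $2,4,7$, giving per-element reuse-time sequences $(2,3)$ for both. In $B$, $e_1$ sits at $1,4,6$ and $e_2$ at $2,5,7$, giving per-element reuse-time sequences $(3,2)$ for both. The element $e_3$ appears exactly once in each trace. Hence the per-element RT multisets agree element-by-element, and summing gives identical global RT histograms. The same check then needs to be carried out for reuse distance: in $A$ the RDs for $e_1$ are $(2,3)$ and for $e_2$ are $(2,3)$, while in $B$ they are $(3,2)$ and $(3,2)$, so the per-element and global RD histograms also match.

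Since $A \ne B$ as sequences, no procedure using only the RT histogram, RD histogram, or per-element histograms can distinguish them, which proves the theorem. The main obstacle I anticipate is the RD verification rather than the RT verification: unlike reuse time, reuse distance depends on which distinct elements (in particular the movable $e_3$) fall inside each reuse window, so the coincidence is not automatic and requires a short case check of each of the four reuse windows across the two traces. Once those four computations are written out, the non-equivalence for all three histogram notions follows at once from the single pair $(A,B)$.
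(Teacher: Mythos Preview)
Your counterexample is correct and the approach---exhibit two distinct AI traces with identical global and per-element RT/RD histograms---is exactly what the paper does. The paper's pair is slightly more economical: it uses only two elements and six accesses,
\[
e_1 e_2 e_1 e_2 e_2 e_1 \quad\text{versus}\quad e_1 e_2 e_2 e_1 e_2 e_1,
\]
but the underlying idea (permuting the order of reuses within each element's subsequence) is the same as yours.

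One small robustness point worth noting: in your pair the first-access time of $e_3$ differs ($5$ in $A$, $3$ in $B$). The theorem as stated concerns only histograms of reuse values, so this is harmless. But recall that the paper's per-datum representation $\textit{pd}[e]=(f_e,r_2,\dots,r_{n_e})$ retains $f_e$, and Theorem~\ref{thm-rds2trace} shows that first-access times together with per-element RD \emph{sequences} do determine the trace. If a reader interprets ``histograms of individual elements'' as also carrying $f_e$, your example would be distinguishable while the paper's would not (its first-access times coincide). You may want either to drop $e_3$ and mirror the paper's two-element construction, or to state explicitly that first-access times are not part of the histogram data.
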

	\begin{proof}
The following two AI traces are different but have the same reuse distance histogram and reuse time histogram for the whole trace and for individual elements:
{\setlength\arraycolsep{0.3pt}
\begin{eqnarray}
ai&:& e_1,e_2,e_1,e_2,e_2,e_1\;\;\;\;\;\;ai': e_1,e_2,e_2,e_1,e_2,e_1\nonumber\\
rt&:& \infty,\infty,\;2,\;2,\;1,\;3\;\;\;\;\;\;\;\;rt': \infty,\infty,\;1,\;3,\;2,\;2\nonumber\\
rd&:& \infty,\infty,\;2,\;2,\;1,\;2\;\;\;\;\;\;\;\;\textit{rd}': \infty,\infty,\;1,\;2,\;2,\;2\nonumber
\end{eqnarray}}
\end{proof}

\ULforem 	
		\begin{theorem} \label{thm-rt2rd}
The RD histogram cannot be built from the RT histogram of the whole trace or individual elements.
	\end{theorem}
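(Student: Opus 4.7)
My plan is to disprove the conversion by counter-example, in the style of Theorem~\ref{thm-rdrt2trace}: I will construct two AI traces that have identical RT histograms, both overall and per element, yet distinct overall RD histograms. The structural observation is that the per-element RT histogram records only the multiset of gaps between consecutive accesses of an element, so it is invariant under any permutation of those gaps; the RD histogram, by contrast, depends on how many distinct items fall inside each gap, so rearranging the gaps of a single element while holding the others fixed can alter some reuse distances without changing any reuse time.

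The construction proceeds in three steps. First, I pick a ``focus'' element whose accesses admit two orderings of the same gap multiset, for instance positions $(1,2,4)$ versus $(1,3,4)$, both of which yield gap multiset $\{1,2\}$. Second, I insert the remaining elements into the still-open positions so that each retains its own per-element gap multiset in both traces and the AI convention that labels elements by first-occurrence order is preserved; keeping the alphabet small (say $m \le 4$) and including one or two singleton ``spacer'' data gives enough flexibility to realize two valid placements. Third, I tabulate the reuse times and reuse distances of both traces and verify by direct inspection that the overall RT histogram and the per-element RT histograms agree while the overall RD histograms differ --- the difference arises because the rearranged gap in one trace encloses the spacer whereas in the other it does not, changing the distinct-item count in exactly one reuse.

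The main obstacle is the second step: once the focus element's gaps are permuted, the positions left for the remaining elements shift, and each must still be placed so as to reproduce its original per-element gap multiset and respect the first-occurrence labeling. I expect to handle this by assigning the singleton spacer the role of an absorber --- it can be moved to any free position without affecting any RT histogram --- while choosing the other elements' gaps so that their (essentially unique) valid placements survive the shift. Once the two traces are pinned down, completing the proof reduces to a routine side-by-side tabulation of the two RD histograms, confirming that they differ in at least one bin even though the RT side of the bookkeeping matches exactly.
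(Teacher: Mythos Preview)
Your high-level strategy---exhibit two AI traces with identical per-element and overall RT histograms but different RD histograms---is exactly the paper's approach. The difficulty lies in your specific construction recipe.

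The plan of permuting the gaps of a \emph{single} focus element and absorbing the position shift with singleton spacers runs into a subtle obstruction. When the only difference between the two traces is that one access of $e_1$ moves from position $p$ to $p'$ and a singleton moves from $p'$ to $p$, the distinct-element content of each of $e_1$'s gaps essentially travels with the gap: the singleton that was inside the long gap before the permutation ends up inside the (relocated) long gap after it, and the reuses of the other fixed elements see the same number of distinct neighbours (only the identity $e_1$ versus the singleton changes, not the count). Concretely, for your own example $(1,2,4)$ versus $(1,3,4)$ with one spacer, the two traces $e_1 e_1 e_2 e_1$ and $e_1 e_2 e_1 e_1$ have \emph{identical} RD histograms $\{\infty{:}2,\,1{:}1,\,2{:}1\}$. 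I tried a number of larger variants of your scheme (one focus element, one or two singleton spacers, $m\le 4$), and in every case the RD histogram was preserved; the AI first-occurrence labeling constraint makes it hard to place a spacer in the short gap in one trace and the long gap in the other while keeping per-element histograms matched.

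The paper's construction avoids this by a different mechanism: it swaps two \emph{repeated} elements ($e_2$ and $e_4$) at two positions, simultaneously permuting both of their gap sequences. Because both elements are non-singletons, this coordinated swap changes how many distinct items fall inside certain gaps of the \emph{other} elements ($e_1$ and $e_3$), which is what finally separates the RD histograms. Your plan becomes workable if you let the ``focus'' role be shared by two repeated elements rather than concentrating it in one element plus passive singleton spacers; as stated, the single-focus-plus-singleton scheme is unlikely to yield the needed asymmetry.
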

	
	\begin{proof}
The following two memory traces
produce the same reuse time histogram but different reuse distance histograms, where
accesses to $e_2,e_4$ are marked by $\uwave{e_2}, \uline{e_4}$ and the change of their location by $\uuline{e_2}, \uuline{e_4}$:

{\setlength\arraycolsep{2pt}
\small
\begin{eqnarray}
ai&:& e_1,\uwave{e_2},e_3,\uline{e_4},e_3,\uuline{e_4},e_1,\uwave{e_2},e_3,\uline{e_4},e_3,\uwave{e_2},e_3,\uuline{e_2},e_3,\uline{e_4},e_3,\uwave{e_2},e_1\nonumber\\
rt&:& \infty,\infty,\infty,\infty,\,2,\phantom{_1}2,\phantom{_1}6,\phantom{_1}6,\phantom{_1}4,\phantom{_1}4,\phantom{_1}2,\phantom{_1}4,\phantom{_1}2,\phantom{_1}2,\phantom{_1}2,\phantom{_1}6,2,\phantom{_1}4,12\nonumber\\
rd&:&  \infty,\infty,\infty,\infty,\,2,\phantom{_1}2,\phantom{_1}4,\phantom{_1}4,\phantom{_1}4,\phantom{_1}4,\phantom{_1}2,\phantom{_1}3,\phantom{_1}2,\phantom{_1}2,\phantom{_1}2,\phantom{_1}3,2,\phantom{_1}3,\phantom{_1}4\nonumber\nonumber
\end{eqnarray}}
{\setlength\arraycolsep{2pt}
\addtolength{\abovedisplayskip}{-5mm}
\small
\begin{eqnarray}
ai'&:& e_1,\uwave{e_2},e_3,\uline{e_4},e_3,\uuline{e_2},e_1,\uwave{e_2},e_3,\uline{e_4},e_3,\uwave{e_2},e_3,\uuline{e_4},e_3,\uline{e_4},e_3,\uwave{e_2},e_1\nonumber\\
rt'&:& \infty,\infty,\infty,\infty,\,2,\phantom{_1}4,\phantom{_1}6,\phantom{_1}2,\phantom{_1}4,\phantom{_1}6,\phantom{_1}2,\phantom{_1}4,\phantom{_1}2,\phantom{_1}4,\phantom{_1}2,\phantom{_1}2,2,\phantom{_1}6,12\nonumber\\
rd'&:&\infty,\infty,\infty,\infty,\,2,\phantom{_1}3,\phantom{_1}4,\phantom{_1}2,\phantom{_1}3,\phantom{_1}4,\phantom{_1}2,\phantom{_1}3,\phantom{_1}2,\phantom{_1}3,\phantom{_1}2,\phantom{_1}2,2,\phantom{_1}3,\phantom{_1}4\nonumber\nonumber
\end{eqnarray}}
\end{proof}

\normalem

		\begin{theorem} \label{thm-rd2rt}
The RT histogram cannot be built from the RD histogram of the whole trace or individual elements.
	\end{theorem}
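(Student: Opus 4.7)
The plan is to mirror the methodology of Theorem~\ref{thm-rt2rd}: exhibit an explicit counterexample pair of AI traces $ai$ and $ai'$, of the same length over the same data set, whose per-datum reuse distance histograms agree element by element (which in particular forces the overall RD histograms to agree) yet whose overall reuse time histograms differ. Because agreement of per-element RD histograms is strictly stronger than agreement of the aggregate, producing a single pair with both properties settles the ``whole trace'' and ``individual elements'' parts of the theorem simultaneously.

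The construction is guided by the following observation: reuse distance depends only on the \emph{set} of distinct data that appear between two consecutive accesses of the same datum, while reuse time depends on the raw positional gap. I would therefore search for a local rearrangement that preserves, for each affected reuse, the distinct-element set inside its window but shifts the reuse's exact position in the trace. A natural setting is $m=3$ with a modest length ($n \approx 9$), because many reuse windows can easily be made ``saturated'' (containing all three distinct elements), so interior permutations leave their distinct counts unchanged. Concretely, starting from a trace such as $e_1\,e_2\,e_3\,e_1\,e_1\,e_2\,e_3\,e_2\,e_3$ and left-rotating a short interior block of consecutive accesses, one obtains a candidate $ai'$ whose every per-element RD multiset is preserved but whose overall RT multiset has a value moved into a different bin. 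The verification step is to tabulate RT and RD for every access in both traces, check per-element RD histogram equality, and exhibit the shifted RT bin.

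The main obstacle is satisfying the per-element RD requirement rather than just the aggregate one. A single off-by-one in any individual datum's distinct count would invalidate the construction, so admissible rearrangements are constrained to respect ``saturation boundaries'': every reuse window affected by the swap must contain exactly the same set of distinct elements before and after the change. This rules out most naive permutations and is why the candidate pair has to be found by a small exhaustive search rather than by a one-line recipe. Once such a pair is in hand, the remainder of the proof is a direct verification table of RT and RD values, presented in the same style as the proof of Theorem~\ref{thm-rt2rd}.
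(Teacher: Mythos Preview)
Your plan is correct and follows exactly the paper's method: exhibit two AI traces whose per-element (hence aggregate) RD histograms coincide while their RT histograms differ. The paper's concrete pair uses $m=4$, $n=13$ and swaps two adjacent accesses ($e_1,e_2 \leftrightarrow e_2,e_1$) inside a buffer of $e_3,e_4$ repetitions, but your $m=3$, $n\approx 9$ search with a local rearrangement is the same idea and does yield a valid witness (e.g.\ $e_1 e_2 e_3 e_3\, e_1 e_2\, e_3 e_2 e_1$ versus $e_1 e_2 e_3 e_3\, e_2 e_1\, e_3 e_2 e_1$).
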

	
	\begin{proof}
The following two memory traces
produce the same reuse distance histogram but different reuse time histograms:
{\setlength\arraycolsep{2pt}
\begin{eqnarray}
ai&:& e_1,e_2,e_3,e_4,e_3,e_4,\uuline{e_1},\uuline{e_2},e_3,e_4,e_3,e_2,e_1\nonumber\\
rt&:& \infty,\infty,\infty,\infty,\,2,\phantom{_1}2,\phantom{_1}6,\phantom{_1}6,\phantom{_1}4,\phantom{_1}4,\phantom{_1}2,\phantom{_1}4,12\nonumber\\
rd&:&  \infty,\infty,\infty,\infty,\,2,\phantom{_1}2,\phantom{_1}4,\phantom{_1}4,\phantom{_1}4,\phantom{_1}4,\phantom{_1}2,\phantom{_1}3,\phantom{_1}4\nonumber\nonumber
\end{eqnarray}}
{\setlength\arraycolsep{2pt}
\addtolength{\abovedisplayskip}{-5mm}
\begin{eqnarray}
ai'&:& e_1,e_2,e_3,e_4,e_3,e_4,\uuline{e_2},\uuline{e_1},e_3,e_4,e_3,e_2,e_1\nonumber\\
rt'&:& \infty,\infty,\infty,\infty,\,2,\phantom{_1}2,\phantom{_1}5,\phantom{_1}7,\phantom{_1}4,\phantom{_1}4,\phantom{_1}2,\phantom{_1}5,11\nonumber\\
rd'&:& \infty,\infty,\infty,\infty,\,2,\phantom{_1}2,\phantom{_1}3,\phantom{_1}4,\phantom{_1}4,\phantom{_1}4,\phantom{_1}2,\phantom{_1}4,\phantom{_1}4\nonumber\nonumber
\end{eqnarray}}
\end{proof}

\section{lemma \ref{lemma-2} to theorem \ref{thm-normal-rt}}
\label{appendix-3}
{\setlength\arraycolsep{2pt}
\begin{eqnarray}
\sum\limits_{i=1}^{n-1}i\times \textit{rt}(i) &=& \sum\limits_{e=1}^m (\lte-\fte)\nonumber\\
&=& \sum\limits_{e=1}^m (\lte-(n-m)+(n-m) -\fte)\nonumber\\
&=& (n-m)\times m + \sum\limits_{e=1}^m (\lte-(n-m))-\sum\limits_{e=1}^m \fte \nonumber\\
&=& (n-m)\times m + \sum\limits_{i=1}^m i-\sum\limits_{i=1}^m i \nonumber\\
&=& (n-m)\times m\nonumber
\end{eqnarray}}

\section{Eq. (\ref{eq-nonside}) to Eq. (\ref{eq-doubleside2})}
\label{appendix-1}

{\setlength\arraycolsep{2pt}
\begin{eqnarray}
&&(n-w+1)\textit{fp}(w)\nonumber\\
& = & (n-w+1)m  -  \sum_{i=w+1}^{n}(i-w)\textit{rt}(i)\nonumber\\
&&- \sum_{e=1}^{m}d(\fte-w)-  \sum_{e=1}^{m}d(n-w+1-\lte)\nonumber\\
&=&  (n-w+1)m - \sum_{i=w+1}^{n}i\times \textit{rt}(i)+w\sum_{i=w+1}^{n}\textit{rt}(i)\nonumber\\
&&- \sum_{e=1}^{m}d(\fte-w)-  \sum_{e=1}^{m}d(n-w+1-\lte)\nonumber\\
&=&  (n-w+1)m - \sum_{i=1}^{n}i\times \textit{rt}(i)+\sum_{i=1}^{w}i\times \textit{rt}(i)\nonumber\\
&&+  w\sum_{i=1}^{n}\textit{rt}(i)- w\sum_{i=1}^{w}\textit{rt}(i)\nonumber\\
&&- \sum_{e=1}^{m}d(\fte-w)-  \sum_{e=1}^{m}d(n-w+1-\lte)\nonumber\\
&=&  (n-w+1)m +\sum_{i=1}^{w}i\times \textit{rt}(i)- w\sum_{i=1}^{w}\textit{rt}(i)\nonumber\\
&&+  w(n-m) - \sum_{e=1}^m (\lte-\fte)\nonumber\\
&&- \sum_{e=1}^{m}d(\fte-w)-  \sum_{e=1}^{m}d(n-w+1-\lte)\nonumber\\
&=& (n+1)m +(n-2m)w - \sum_{i=1}^{w-1}(w-i)\textit{rt}(i)\nonumber\\
&&+ \sum_{e=1}^{m}\min(\fte,w) -  \sum_{e=1}^{m}\max(\lte,n-w+1)\nonumber
\end{eqnarray}}

\section{Eq. (\ref{eq-nonside}) to Eq. (\ref{eq-singleside})}
\label{appendix-2}

{\setlength\arraycolsep{2pt}
\begin{eqnarray}
&&(n-w+1)\textit{fp}(w)\nonumber\\
& = & (n-w+1)m  -  \sum_{i=w+1}^{n}(i-w)\textit{rt}(i)\nonumber\\
&&- \sum_{e=1}^{m}d(\fte-w)-  \sum_{e=1}^{m}d(n-w+1-\lte)\nonumber\\
& = & (n-w+1)m  -  \sum_{i=w+1}^{n}(i-w)\textit{rt}(i)\nonumber\\
&&- \sum_{e=1}^{m}(\fte-w)-  \sum_{e=1}^{m}(n-w+1-\lte)\nonumber\\
&&+ \sum_{\fte< w}(\fte-w)+  \sum_{n-w+1< \lte}(n-w+1-\lte)\nonumber\\
& = & (n-w+1)m  -  \sum_{i=w+1}^{n}(i-w)\textit{rt}(i)\nonumber\\
&&+ \sum_{e=1}^{m}(\lte - \fte)+ mw-  m(n-w+1) \nonumber\\
&&+ \sum_{\fte< w}(\fte-w)+\sum_{n-w+1< \lte}(n-w+1-\lte)\nonumber\\
& = &  w\sum_{i=w+1}^{n}\textit{rt}(i)   -\sum_{i=w+1}^{n}i\times \textit{rt}(i)+\sum_{i=1}^{n}i\times \textit{rt}(i)  + mw\nonumber\\
&&+ \sum_{\fte< w}(\fte-w)+  \sum_{n-w+1< \lte}(n-w+1-\lte)\nonumber\\
& = &  \sum_{i=1}^{n}\min(i,w)\textit{rt}(i) + mw \nonumber\\
&&- \sum_{e=1}^m d(w-\fte)- \sum_{e=1}^m d(\lte- (n-w+1))\nonumber
\end{eqnarray}}

\ULforem 	
\bibliographystyle{ACM-Reference-Format-Journals}
\bibliography{all}
\ULforem 	

\end{document}